\newtheorem{theorem}{Theorem}
\newtheorem{lemma}[theorem]{Lemma}
\newtheorem{proposition}[theorem]{Proposition}
\newtheorem{definition}[theorem]{Definition}
\begin{document}
\title{A Regularity Measure for Context Free Grammars}
\author{M. Praveen\\The Institute of Mathematical Sciences, Chennai, India}
\date{}
\maketitle
\newcommand{\nat}{\mathbb{N}}
\newcommand{\Oh}{\mathcal{O}}
\newcommand{\poly}{\mathit{poly}}

\acrodef{CFG}{Context Free Grammar}
\acrodef{cfl}[CFL]{Context Free Language}
\acrodef{wlog}{without loss of generality}
\acrodef{fpt}[\textsc{Fpt}]{Fixed Parameter Tractable}

\newcommand{\regms}{d}
\newcommand{\degree}{m}
\newcommand{\cfg}{G}
\newcommand{\vars}{V}
\newcommand{\var}{A}
\newcommand{\numvars}{n}
\newcommand{\maxtermins}{e}
\newcommand{\termins}{\Sigma}
\newcommand{\letter}{\sigma}
\newcommand{\prodns}{P}
\newcommand{\macaxiom}{S}
\newcommand{\word}{w}
\newcommand{\pword}{u}
\newcommand{\prlen}{r}
\newcommand{\numtr}{s}
\newcommand{\reachrel}{\xlongrightarrow{+}}
\newcommand{\prstr}{t}
\newcommand{\prkimg}{\Pi}
\newcommand{\zv}{\overline{\mathbf{0}}}
\newcommand{\yield}{Y}
\newcommand{\height}{h}
\newcommand{\info}{f}
\newcommand{\numpv}{p}

\newcommand{\graph}{H}
\newcommand{\vertexs}{V}
\newcommand{\edges}{E}
\newcommand{\tree}{\mathcal{T}}
\newcommand{\bag}{B}
\newcommand{\tn}{\eta}
\newcommand{\nodes}{\mathrm{Nodes}}
\newcommand{\verto}{v}
\newcommand{\tw}{\mathit{tw}}
\newcommand{\remgr}[1]{R(#1)}
\newcommand{\troot}{\mathrm{root}}

\newcommand{\macset}{U}
\newcommand{\elem}{u}
\newcommand{\lang}{L}
\newcommand{\mults}[1]{\mathbf{#1}}
\newcommand{\prstrmultss}{\mathbb{T}}
\newcommand{\varmults}{\mults{v}}
\newcommand{\rp}{R}
\newcommand{\rs}{\mathit{RS}}
\newcommand{\rslen}{p}
\newcommand{\cur}{\mathrm{Current}}
\newcommand{\folup}{\mathrm{Followup}}
\newcommand{\aut}{\mathcal{A}}
\newcommand{\val}{\mathit{val}}
\newcommand{\config}{c}
\newcommand{\restr}{\upharpoonright}

\newcommand{\idxo}{i}
\newcommand{\idxt}{j}
\newcommand{\idxh}{k}
\newcommand{\trlen}{\ell}

\newcommand{\Defref}[1]{Definition~\ref{#1}}
\newcommand{\defref}[1]{Def.~\ref{#1}}
\newcommand{\Thmref}[1]{Theorem~\ref{#1}}
\newcommand{\thmref}[1]{Theorem~\ref{#1}}
\newcommand{\corref}[1]{Corollary~\ref{#1}}
\newcommand{\Lemref}[1]{Lemma~\ref{#1}}
\newcommand{\lemref}[1]{Lemma~\ref{#1}}
\newcommand{\Propref}[1]{Proposition~\ref{#1}}
\newcommand{\propref}[1]{Prop.~\ref{#1}}
\newcommand{\Claimref}[1]{Claim~\ref{#1}}
\newcommand{\claimref}[1]{Claim~\ref{#1}}
\newcommand{\algoref}[1]{Algorithm~\ref{#1}}
\newcommand{\Algoref}[1]{Algorithm~\ref{#1}}
\newcommand{\Figref}[1]{Figure~\ref{#1}}
\newcommand{\figref}[1]{Fig.~\ref{#1}}
\newcommand{\Tabref}[1]{Table~\ref{#1}}
\newcommand{\tabref}[1]{Table~\ref{#1}}
\newcommand{\Secref}[1]{Section~\ref{#1}}
\newcommand{\secref}[1]{Sect.~\ref{#1}}
\newcommand{\Chref}[1]{Chapter~\ref{#1}}
\newcommand{\chref}[1]{Chap.~\ref{#1}}
\newcommand{\apndref}[1]{Appendix~\ref{#1}}
\newcommand{\Apndref}[1]{App.~\ref{#1}}

\newlength{\ml}
\setlength{\ml}{1cm}

\begin{abstract}
  Parikh's theorem states that every \ac{cfl} has the same Parikh
  image as that of a regular language. A finite state automaton
  accepting such a regular language is called a Parikh-equivalent
  automaton. In the worst case, the number of states in any
  non-deterministic Parikh-equivalent automaton is exponentially large
  in the size of the \ac{CFG}. We associate a regularity width
  $\regms$ with a \ac{CFG} that measures the closeness of the \ac{cfl}
  with regular languages. The degree $\degree$ of a \ac{CFG} is one
  less than the maximum number of variable occurrences in the right
  hand side of any production. Given a \ac{CFG} with $\numvars$
  variables, we construct a Parikh-equivalent non-deterministic
  automaton whose number of states is upper bounded by a polynomial in
  $\numvars ( \regms^{2\regms ( \degree + 1)})$, the degree of the
  polynomial being a small fixed constant. Our procedure is
  constructive and runs in time polynomial in the size of the
  automaton. In the terminology of parameterized complexity, we prove
  that constructing a Parikh-equivalent automaton for a given \ac{CFG}
  is \ac{fpt} when the degree $\degree$ and regularity width $\regms$
  are parameters.  We also give an example from program
  verification domain where the degree and regularity are small
  compared to the size of the grammar.
\end{abstract}
\acresetall

\section{Introduction}
The Parikh image $\prkimg ( \word)$ of a word $\word$ over a finite
alphabet $\termins$ is a mapping $\prkimg ( \word): \termins \to \nat$
such that for each letter $\letter \in \termins$, $\prkimg ( \word)
( \letter)$ is the number of times $\letter$ occurs in $\word$. The
Parikh image of a language is the set of Parikh images of its words. The
well known Parikh's theorem \cite{RJP1966} states that for every
\ac{cfl} $\lang$, there is a regular language with the same Parikh
image as that of $\lang$. This fundamental result in automata theory
has many applications, including verification \cite{OHI1978, TL2010,
EG2011}, equational horn clauses \cite{VSS2005} and automata theory
itself \cite{B2007}.

Apart from the equivalence itself, the complexity of computing a
representation of the regular language or the Parikh image is crucial
to the efficiency of many applications. There are examples where any
Parikh-equivalent non-deterministic automaton is exponentially large
in the size of the \ac{CFG} (see \secref{sec:cfg}). However,
as is frequently the case, instances of this problem arising in
applications have some structure that can be exploited to compute
smaller automata. In this paper, we introduce a systematic way of
measuring the ``closeness'' of a given \ac{cfl} to regular languages
and show that closer the \ac{cfl} is to regular languages, smaller
will be the size of Parikh-equivalent non-deterministic automata. More
precisely,
\begin{enumerate}
  \item We define a number called regularity width $\regms$ that can
    be computed from a given \ac{CFG}. If the \ac{CFG} happens to be a
    regular grammar, then its regularity width will be $1$.
  \item As an illustration, we show that instances of \acp{CFG} arising
    from a verification application \cite{EG2011} will have small
    values of regularity width $\regms$.
  \item With $\numvars$ denoting the number of variables in the given
    \ac{CFG} and degree $\degree$ being one less than the maximum
    number of variable occurrences in the right hand side of any
    production, we show that a Parikh-equivalent non-deterministic
    finite automaton can be constructed whose number of states is
    upper bounded by a polynomial in $\numvars ( \regms^{2\regms (
    \degree + 1)})$, the degree of the polynomial being a small fixed
    constant.
\end{enumerate}

Finer study of automaton complexity has been done before,
e.g., \cite{EG2011, AWT2010}. In \cite{EG2011}, a parameter called
number of procedure variables $\numpv$ is introduced and it is proved
that when $\numpv$ is a fixed constant, a Parikh-equivalent
non-deterministic automaton can be constructed whose number of states
is polynomial in $\numvars$. The result in this paper is both a
generalization and refinement of the results in \cite{EG2011}. It is a
generalization since the regularity width $\regms$ defined here is
linear in the number of procedure variables $\numpv$ for \acp{CFG}
arising from problems being considered in \cite{EG2011}, and $\regms$
can be computed for any given \ac{CFG}. Our result is a refinement
since \cite{EG2011} gives automata sizes of the kind $\numvars ^{
\numpv}$ while we give automata sizes of the kind $\numvars (
\regms^{4\regms})$. To get a rough idea of the kind of difference this
can make asymptotically, consider the ratio between $\numvars^{\numpv
+ 1}$ and $2^{\numpv}\numvars$ taken from \cite{DFS1999}: for
$\numvars = 100$ and $\numpv = 10$, the ratio is $9.8 \times 10^{14}$
while for $\numvars = 150$ and $\numpv= 20$, the ratio is $2.1 \times
10^{35}$.

To systematically explore the possibility of finding efficient
algorithms for restricted cases of computationally hard problems,
Downey and Fellows introduced parameterized complexity \cite{DF1999}.
If $\numvars$ is the size of an input instance and $\regms$ is its
parameter (that is usually much lesser than $\numvars$), then
algorithms with running time $\Oh ( \numvars ^{\poly ( \regms)})$ are
called XP algorithms and those with running time $\Oh (
f(\regms))\poly ( \numvars)$ are called \ac{fpt} algorithms. Here,
$\poly$ is any polynomial and $f$ is any computable function (usually
required to be single exponential or less to be of any immediate
practical use). It is known that the class of XP algorithms is
strictly more powerful than the class of \ac{fpt} algorithms. The
procedure we give for constructing the automaton runs in time
polynomial in the size of the output and hence is \ac{fpt}.

Our results build on technique based on pumping for \acp{cfl}
\cite{RJP1966,JG1977}. Additional techniques and arguments are needed
to closely control where and how pumping is done so that for
\acp{CFG} with small regularity width, smaller automata suffice.
Techniques from graph theory and tree decompositions \cite[Chapter
6]{DF1999} are used in arguments on size of the constructed automata.

\textbf{Related work:} A finer analysis of the automata size has been
done in \cite{AWT2010} with focus on the size of the alphabet. In
\cite{EGKL2010}, the size of automata are related to finite index
\acp{CFG}. Some of the techniques here are inspired by insights given
in \cite{EGKL2010}. Since Parikh images of \acp{cfl} are semilinear,
they can be represented in Presburger arithmetic. In \cite{VSS2005},
the complexity of Presburger formula representation has been
considered. The algebraic view of Parikh's theorem has been studied in
\cite{DLP1973,HK1999}.

\section{Preliminaries}
\subsection{\aclp{CFG}}
\label{sec:cfg}
Let $\termins$ be a finite set of symbols, called \emph{terminals}. A
word $\word$ over $\termins$ is any finite sequence of terminals. The
empty sequence is denoted by $\epsilon$. The word obtained by
concatenating $\idxo \in \nat$ copies of $\word$ is denoted by
$\word^{\idxo}$. The set of all words over $\termins$ is denoted
$\termins^{*}$. A language $\lang$ is any subset of $\termins^{*}$.
The \emph{Parikh image} $\prkimg( \word)$ of a word is a mapping
$\prkimg( \word): \termins \to \nat$ such that for each $\letter \in
\termins$, $\prkimg( \word)(\letter)$ is the number of times $\letter$
occurs in $\word$. The Parikh image of $\epsilon$ is denoted by
$\zv$. The Parikh image of a language $\lang \subseteq
\termins^{*}$ is the set of mappings $\{\prkimg (\word ) \mid \word
\in \lang\}$.

We follow the notation of \cite[Chapter 5]{HMU2007}. Let $\cfg = (
\vars, \termins, \prodns, \macaxiom)$ be a \ac{CFG} with a set
$\vars=\{\var_{1}, \dots, \var_{\numvars}\}$ of \emph{variables}, a
finite set $\termins$ of \emph{terminals}, a finite set $\prodns \subseteq
\vars \times (\vars \cup \termins)^{*}$ of \emph{productions} and an
\emph{axiom} $\macaxiom \in \vars$. We denote words over $\termins$ by
$\word$, $\word_{1}$ etc. A production $(\var, \word_{0}
\var_{1} \word_{1} \cdots \var_{\prlen} \word_{\prlen}) \in \prodns$
is denoted as $\var \rightsquigarrow \word_{0} \var_{1} \word_{1} \cdots
\var_{\prlen} \word_{\prlen}$. The \emph{degree} $\degree$ of $\cfg$ is
defined to be $-1 + \max\{\prlen \mid \var \rightsquigarrow
\word_{0} \var_{1} \word_{1} \cdots \var_{\prlen} \word_{\prlen}
\text{ is a production}\}$.

The set of words $\lang(\cfg)$ over $\termins$ generated by $\cfg$ is
called the language of $\cfg$ and is called a \ac{cfl}.  Parikh's
theorem \cite{RJP1966} states that the Parikh image of every
\ac{cfl} is equal to that of a regular language. Given a \ac{CFG}
$\cfg$, a finite automaton accepting a language whose Parikh image is
the same as that of $\lang ( \cfg)$ is called a Parikh-equivalent
automaton for $\cfg$. Consider the \ac{CFG} $\cfg_{n}$ with
productions $\{\var_{\idxt} \rightsquigarrow \var_{\idxt - 1}
\var_{ \idxt - 1} \mid 2 \le \idxt \le n\} \cup \{\var_{1}
\rightsquigarrow a\}$ and axiom $\macaxiom = \var_{n}$. The language
of $\cfg_{n}$ is the singleton set $\{a ^{2^{n-1}}\}$ and hence the
smallest Parikh-equivalent non-deterministic automaton has
$2^{n-1} + 1$ states.

We assume familiarity with parse trees \cite[Chapter 5]{HMU2007} and
yield of parse trees. We denote parse trees by $\prstr$, $\prstr_{1}$
etc.~and their yeilds by $\yield(\prstr)$, $\yield( \prstr_{1})$ etc.
The variable labelling the root of a parse tree $\prstr$ is denoted by
$\troot(\prstr)$. For any variable $\var \in \vars$, the parse tree
$\prstr$ is defined to be \emph{$\var$-recurrence free} if in any path
from the root to a leaf of $\prstr$, $\var$ occurs at most once. We
define $\prstr$ to be \emph{$\var$-occurrence free} if $\var$ does not
occur anywhere in $\prstr$. If $\prstr$ is $\var$-recurrence free and
rooted at $\var$, then any proper subtree of $\prstr$ is
$\var$-occurrence free. We write $\prstr = \prstr_{1} \cdot
\prstr_{2}$ to denote that $\prstr_{1}$ is a parse tree except that
exactly one leaf $\tn$ is labelled by a variable, say $\var$, instead
of a terminal; the tree $\prstr_{2}$ is a parse tree rooted at $\var$;
and the parse tree $\prstr$ is obtained from $\prstr_{1}$ by replacing
the leaf $\tn$ with $\prstr_{2}$. The height of a parse tree
$\prstr$ is denoted by $\height ( \prstr)$.

Using the fact that a Parikh-equivalent automaton need not preserve
the order in which letters occur in accepted words, the following
lemma allows us to manipulate parse trees so that the automaton need
not keep track of too many occurrences of the same variable. This is
also one of the key observations used in the automaton construction
given in \cite{EGKL2010}.

\begin{lemma}[\cite{EGKL2010}]
  \label{lem:parseTreeReduceRecurrence}
  Suppose  $\prstr_{1}, \prstr_{2}$ are two parse trees and $\var \in
  \vars$ is a variable such that $\prstr_{1}$ is not $\var$-recurrence
  free and $\prstr_{2}$ is not $\var$-occurrence free. Then there are
  parse trees $\prstr_{1}', \prstr_{2}'$ such that
  \begin{enumerate}
    \item $\troot(\prstr_{1}') = \troot( \prstr_{1})$,
      $\troot(\prstr_{2}') = \troot( \prstr_{2})$ and 
    \item $\prkimg( \yield( \prstr_{1})) + \prkimg( \yield(
      \prstr_{2})) = \prkimg( \yield( \prstr_{1}')) + \prkimg( \yield(
      \prstr_{2}'))$ and
    \item $\prstr_{1}$ is $\var$-recurrence free.
  \end{enumerate}
\end{lemma}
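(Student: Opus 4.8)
The plan is to repeatedly relocate $\var$-to-$\var$ \emph{pumps} out of $\prstr_{1}$ and into $\prstr_{2}$, exploiting the fact that the Parikh image of the combined yield is insensitive to where a subtree is attached. Since $\prstr_{1}$ is not $\var$-recurrence free, some root-to-leaf path carries two nodes labelled $\var$; I pick such a path and on it a node $\tn_{1}$ labelled $\var$ together with a strict descendant $\tn_{2}$ also labelled $\var$. Writing the subtree of $\prstr_{1}$ rooted at $\tn_{1}$ as $\prstr_{a} \cdot \prstr_{b}$, where $\prstr_{b}$ is the subtree rooted at $\tn_{2}$ and $\prstr_{a}$ is the intervening parse tree whose single marked leaf sits at $\tn_{2}$ (so $\troot(\prstr_{a}) = \var$ and its marked leaf is labelled $\var$), I obtain a pump $\prstr_{a}$. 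Symmetrically, since $\prstr_{2}$ is not $\var$-occurrence free it contains a node $\tn_{3}$ labelled $\var$; let $\prstr_{c}$ be the subtree rooted there.

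The single relocation step deletes the pump from the first tree by replacing the subtree $\prstr_{a} \cdot \prstr_{b}$ with $\prstr_{b}$ alone, and inserts it into the second tree by replacing the subtree $\prstr_{c}$ with $\prstr_{a} \cdot \prstr_{c}$. Both replacements are well formed because $\prstr_{b}$, $\prstr_{c}$ and $\prstr_{a} \cdot \prstr_{c}$ are all rooted at $\var$, the common label of $\tn_{1}, \tn_{2}, \tn_{3}$; consequently the labels of the overall roots are untouched and condition~1 is maintained. For condition~2, observe that the yield of the pump has the shape $\word_{\ell} \, [\,\cdot\,] \, \word_{r}$ for terminal words $\word_{\ell}, \word_{r}$ surrounding the marked leaf; deleting it removes $\prkimg(\word_{\ell}) + \prkimg(\word_{r})$ from $\prkimg(\yield(\prstr_{1}))$, while inserting it adds exactly the same vector to $\prkimg(\yield(\prstr_{2}))$, so the sum in condition~2 is invariant under the step.

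To finish I iterate the step and argue termination via the number of $\var$-labelled nodes in the first tree. In the chosen decomposition the pump $\prstr_{a}$ contains at least two $\var$-nodes (its root at $\tn_{1}$ and its marked leaf at $\tn_{2}$), so deleting it strictly decreases this count, whereas the second tree only accumulates $\var$-nodes and therefore never loses its insertion point $\tn_{3}$. The count is a non-negative integer that drops by at least one per step, so after finitely many steps no root-to-leaf path of the first tree carries two $\var$'s; that tree is then $\var$-recurrence free and is taken as $\prstr_{1}'$, with the final second tree taken as $\prstr_{2}'$. The main obstacle is precisely this termination bookkeeping: one must exhibit a measure that provably decreases at every relocation while guaranteeing that $\prstr_{2}$ retains a $\var$-node to receive each extracted pump. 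The node-count measure above handles both at once, since a relocation can only increase the number of $\var$'s in the second tree.
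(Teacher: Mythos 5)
Your proposal is correct and follows essentially the same route as the paper: decompose $\prstr_{1}$ around two nested $\var$-occurrences, excise the intervening pump, graft it onto a $\var$-node of $\prstr_{2}$, and iterate until no $\var$-recurrence remains, with termination guaranteed by a strictly decreasing measure (you count $\var$-labelled nodes of the first tree, the paper counts all its nodes; both work). No substantive difference.
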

\begin{proof}
  Since there is a path from the root to a leaf of $\prstr_{1}$ in
  which $\var$ occurs at least twice, we can write $\prstr_{1} =
  \prstr\cdot \prstr' \cdot \prstr''$ where $\prstr'$ and $\prstr''$
  are rooted at $\var$. The parse tree $\prstr_{2}$ can similarly be
  written as $\prstr_{2} = \prstr_{2}'' \cdot \prstr_{2}'''$, where
  $\prstr_{2}'''$ is rooted at $\var$. Replace $\prstr_{1}$ by $\prstr
  \cdot \prstr''$ and $\prstr_{2}$ by $\prstr_{2}'' \cdot \prstr' \cdot
  \prstr_{2}'''$ (i.e., remove the subtree $\prstr'$ from $\prstr_{1}$
  and insert it into $\prstr_{2}$). This will reduce the number of
  nodes in $\prstr_{1}$. Repeat this process until $\prstr_{1}$ is
  $\var$-recurrence free. This process will terminate after finitely
  many steps since there are only finitely many nodes in $\prstr_{1}$
  to begin with.
\end{proof}

\subsection{Graphs and Tree Decompositions}
\begin{definition}[Tree decomposition, treewidth]
A \emph{tree decomposition} of an undirected graph $\graph =
(\vertexs, \edges)$ is a pair
$(\tree,(\bag_{\tn})_{\tn\in \nodes(\tree)})$,
where $\tree$ is a tree and $(\bag_{\tn})_{\tn\in
\nodes(\tree)}$ is a family of subsets of $\vertexs$ (called
\emph{bags}) such that:
\begin{itemize}
  \item For all $\verto\in \vertexs$, the set $\{\tn\in
    \nodes(\tree)\mid \verto\in \bag_{\tn}\}$ is
    nonempty and connected in $\tree$.
  \item For every edge $(\verto_{1},\verto_{2})\in \edges$,
    there is a $\tn\in \nodes(\tree)$ such that
    $\verto_{1},\verto_{2} \in \bag_{\tn}$.
\end{itemize}
The width of such a decomposition is the number
$\max\{|\bag_{\tn}|\mid \tn\in \nodes(\tree)\}-1$. The
\emph{treewidth} $\tw(\graph)$ of $\graph$ is the minimum of
the widths of all tree decompositions of $\graph$. 
\end{definition}

If a graph $\graph$ has an edge between every pair of vertices among
$\{\verto_{1}, \dots, \verto_{\prlen}\}$, then $\{\verto_{1}, \dots,
\verto_{\prlen}\}$ is said to \emph{induce a clique} in $\graph$.

\begin{lemma}[{\cite[Lemma 6.49]{DF1999}}]
  \label{lem:cliqueTreeDecomp}
  If the set of vertices $\{\verto_{1}, \dots, \verto_{\prlen}\}$
  induce a clique in the graph $\graph$, then every tree decomposition
  of $\graph$ will have a bag $\bag$ with $\{\verto_{1}, \dots,
  \verto_{\prlen}\} \subseteq \bag$.
\end{lemma}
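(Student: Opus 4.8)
The plan is to exploit the two defining axioms of a tree decomposition together with the classical \emph{Helly property} for subtrees of a tree. For each $\idxo \in \{1, \dots, \prlen\}$, let $\tree_{\idxo}$ denote the set $\{\tn \in \nodes(\tree) \mid \verto_{\idxo} \in \bag_{\tn}\}$ of tree nodes whose bag contains $\verto_{\idxo}$. By the first axiom, each $\tree_{\idxo}$ is nonempty and induces a connected subtree of $\tree$. First I would observe that, because $\{\verto_{1}, \dots, \verto_{\prlen}\}$ induces a clique, every pair $\verto_{\idxo}, \verto_{\idxt}$ is joined by an edge, so the second axiom supplies a node $\tn$ with $\verto_{\idxo}, \verto_{\idxt} \in \bag_{\tn}$; that is, $\tn \in \tree_{\idxo} \cap \tree_{\idxt}$. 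Hence the subtrees $\tree_{1}, \dots, \tree_{\prlen}$ pairwise intersect.

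The core of the argument is then the Helly property: a finite family of pairwise-intersecting subtrees of a tree has a common node. Granting this, $\bigcap_{\idxo = 1}^{\prlen} \tree_{\idxo}$ contains some node $\tn$, and by the definition of the $\tree_{\idxo}$ we get $\verto_{\idxo} \in \bag_{\tn}$ for every $\idxo$, i.e., $\{\verto_{1}, \dots, \verto_{\prlen}\} \subseteq \bag_{\tn}$, which is exactly the conclusion.

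To establish the Helly property --- the step I expect to carry the real weight --- I would root $\tree$ at an arbitrary node and, for each subtree $\tree_{\idxo}$, let $\tn_{\idxo}$ be its node of minimum depth (the unique top vertex, well defined since $\tree_{\idxo}$ is connected). Choose an index $\idxt$ maximizing the depth of $\tn_{\idxt}$. I claim $\tn_{\idxt}$ lies in every $\tree_{\idxo}$: fix $\idxo$ and pick, using pairwise intersection, a node $x \in \tree_{\idxo} \cap \tree_{\idxt}$. Both $\tn_{\idxo}$ and $\tn_{\idxt}$ are ancestors of $x$, hence comparable in the ancestor order; since $\tn_{\idxt}$ has maximal depth it is a descendant of (or equal to) $\tn_{\idxo}$, so $\tn_{\idxt}$ lies on the path from $\tn_{\idxo}$ to $x$. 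As $\tn_{\idxo}, x \in \tree_{\idxo}$ and $\tree_{\idxo}$ is connected, this whole path, and in particular $\tn_{\idxt}$, belongs to $\tree_{\idxo}$. Thus $\tn_{\idxt}$ is the common node whose existence was claimed. The only delicate points are verifying that the top vertex of a connected subtree is well defined and that it is an ancestor of every node of that subtree, and that the ancestors of a common node form a chain --- all immediate once $\tree$ is rooted.
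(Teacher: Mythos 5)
Your proof is correct. Note that the paper itself gives no argument for this lemma at all: it is imported verbatim as Lemma 6.49 of Downey and Fellows, so there is nothing in the text to compare your proof against. What you have written is the standard self-contained proof: the clique edges force the subtrees $\tree_{1},\dots,\tree_{\prlen}$ (the sets of tree nodes whose bags contain each $\verto_{\idxo}$) to pairwise intersect, and the Helly property for subtrees of a tree then yields a common node, whose bag contains the whole clique. Your verification of the Helly property is also sound: rooting $\tree$, the minimum-depth node of a connected subtree is unique and is an ancestor of every node of that subtree (otherwise the path to a lowest common ancestor, which lies in the subtree by connectivity, would produce a shallower node), ancestors of a common node form a chain, and the deepest of the top vertices therefore lies on a path between two nodes of each $\tree_{\idxo}$ and hence in each $\tree_{\idxo}$. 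This is exactly the argument one finds in the cited source and in standard treatments, so your proposal supplies a proof where the paper supplies only a reference.
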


\section{The Regularity Measure}
\label{sec:regms}
In this section, we define the regularity width and give an example
application where regularity width is much lower than the size of the
\ac{CFG}.

We define a binary accessibility relation $\rightarrow$ between
variables in $\vars$ as follows. We have $\var \rightarrow \var'$ if
there is a production $\var \rightsquigarrow (\vars \cup
\termins)^{*} \var' (\vars \cup \termins)^{*}$. The reachability
relation $\reachrel$ is the transitive closure of the accessibility
relation $\rightarrow$.
\begin{definition}[Reminder graph and regularity width]
  \label{def:reminderGraph}
  For a \ac{CFG} $\cfg=( \vars, \termins, \prodns, \macaxiom)$, its
  \textbf{reminder graph} $\remgr{\cfg}$ is a graph whose set of
  vertices is $\vars$ and set of edges $\edges$ is as follows. For every
  production $\var \rightsquigarrow \word_{0} \var_{1} \word_{1} \cdots
  \var_{\prlen} \word_{\prlen}$ in $\cfg$ with $\prlen \ge 2$ and
  every $1\le \idxo, \idxt \le \prlen$, following edges are present:
  \begin{align}
    \idxo \ne \idxt \text{ implies } (\var_{\idxo}, \var_{\idxt}) \in
    \edges
    \label{eq:siblingEdge}
  \end{align}
  \begin{align}
    \forall \var' \in \vars \setminus \{\var_{\idxt}\},\quad
    \var_{\idxo} \reachrel \var' \text{ implies } (\var',
    \var_{\idxt}) \in \edges
    \label{eq:descendandEdge}
  \end{align}
  The \textbf{regularity width} $\regms$ of $\cfg$ is defined to be
  $\tw(\remgr{\cfg}) + 1$.
\end{definition}

The intuition behind the definition of reminder graph is explained in
the following diagram of a parse tree.
\begin{figure}[!htp]
  \begin{center}
    \begin{tikzpicture}[>=stealth]
  \node[circle] (a) at (0\ml,0\ml) {$\var$};
  \node[circle] (a1) at ([xshift=-2\ml,yshift=-1\ml]a)  {$\var_{1}$};
  \node[circle] (a2) at ([yshift=-1\ml]a)  {$\var_{2}$};
  \node[circle] (a3) at ([xshift=2\ml,yshift=-1\ml]a)  {$\var_{3}$};

  \draw (a1) -- ([xshift=-2\ml,yshift=-2\ml]a1.center) --
  ([xshift=2\ml,yshift=-2\ml]a1.center) -- (a1);

  \node[circle] (a4) at ([yshift=-1\ml]a1.center) {$\var'$};

  \draw (a) -- (a1);
  \draw (a) -- (a2);
  \draw (a) -- (a3);
  \draw[dotted] (a1) -- (a4);
\end{tikzpicture}
  \end{center}
  \caption{An illustration for reminder graph}
  \label{fig:reminderGraphIllustration}
\end{figure}
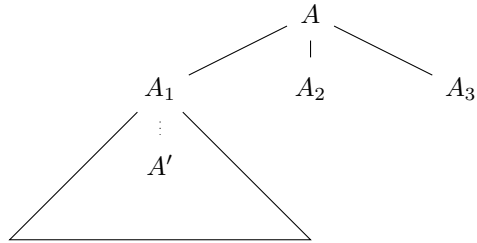
Suppose we are trying to imitate this parse tree through a finite
state automaton and we go down the tree rooted at
$\var_{1}$. The edge between $\var_{2}$ and $\var_{3}$ reminds us
that both $\var_{2}$ and $\var_{3}$ have to be followed up later.
Going down the production starting from $\var_{1}$, suppose we reach
the variable $\var'$. The edge between $\var'$ and $\var_{2}$ reminds
us that $\var_{2}$ is yet to be followed up. The edge between
$\var'$ and $\var_{1}$ reminds us that we have already gone down a
tree rooted at $\var_{1}$, so we should avoid going down subtrees that
are also rooted at $\var_{1}$, thus avoiding the necessity to keep
track of too many $\var_{1}$s.

Since regular grammars have at most one variable in the right hand
side of any production, their reminder graphs do not have any edges.
Hence, regularity width of regular grammars is $1$. Following is a
more interesting example from \cite{EG2011}: if there are programs
running in many threads in parallel and synchronizing on common
actions, certain verification problem reduces to reasoning about
Parikh images of \ac{cfl} derived from the programs. If there is a
program point $C_{1}$ from which some action $a$ can be performed and
program point $C_{2}$ can be reached, then we create a production
$C_{1} \rightsquigarrow a C_{2}$. If a subroutine $P_{0}$ is invoked
at point $C_{1}$ and then $C_{2}$ is reached, we create a production
$C_{1} \rightsquigarrow P_{0}C_{2}$. Lets call variables like $P_{0}$ and
$C_{2}$ ports. The number of ports is twice the number of control
locations that invoke some subroutine. The only edges in the reminder
graph are those between ports and other variables. This gives an easy
way to construct a tree decomposition of the reminder graph: if
$C_{1}, \dots, C_{\prlen}$ is a list of all program points, then
create a path with $\prlen$ nodes, each node $\tn_{\idxo}$ associated
with a bag $\bag_{\idxo}$, $1 \le \idxo \le \prlen$. If $\mathcal{P}$
is the set of all ports, then setting $\bag_{\idxo} = \mathcal{P} \cup
\{C_{\idxo}\}$ for every $1 \le \idxo \le \prlen$ will give us a tree
decomposition of the reminder graph with each bag containing
$|\mathcal{P}| + 1$ elements. Hence, for a program with
$|\mathcal{P}|$ ports, the associated \ac{CFG} has regularity width
$|\mathcal{P}| + 1$.

\section{The Automaton Construction}
Let $\macset$ be any set. A \emph{multiset} $\mults{\elem}$ over
$\macset$ is a mapping $\mults{\elem}: \macset \to \nat$.  We
sometimes use the notation $\Lbrack \elem_{1}, \elem_{3}, \elem_{3}
\Rbrack$ to denote the multiset that maps $1$ to $\elem_{1}$, $2$ to
$\elem_{3}$ and $0$ to all others. The empty multiset is denoted
$\emptyset$. Given two multisets $\mults{\elem}_{1}$ and
$\mults{\elem}_{2}$, their sum $\mults{\elem}_{1} \oplus
\mults{\elem}_{2}$ is defined to be the mapping such that
$\mults{\elem}_{1} \oplus \mults{\elem}_{2} ( \elem) =
\mults{\elem}_{1} ( \elem) + \mults{\elem}_{2} (\elem)$ for all $\elem
\in \macset$. If $\mults{\elem} ( \elem) \ge 1$, then $\mults{\elem}
\ominus \Lbrack \elem \Rbrack$ is defined to be the mapping such that
$\mults{\elem} \ominus \Lbrack \elem \Rbrack ( \elem') = \mults{\elem}
( \elem')$ for all $\elem' \in \macset \setminus \{\elem\}$ and
$\mults{\elem} \ominus \Lbrack \elem \Rbrack (\elem) = \mults{\elem} (
\elem) -1$.

The automaton we construct will have as their states sequences of
reminder pairs defined below.
\begin{definition}
  \label{def:reminderPairSeq}
  A \textbf{reminder pair} $\rp$ is a tuple $( \var, \varmults)$ where
  $\var \in \vars \cup \{\bot\}$ is a variable or a special symbol
  $\bot$ and $\varmults: \vars \to \nat$ is a
  multiset over $\vars$. The first component of this tuple will be
  referred to as $\rp.\cur$ and the second component as
  $\rp.\folup$. A \textbf{reminder sequence} $\rs = \rp_{1} \cdots
  \rp_{\rslen}$ is a sequence of reminder pairs, where $\rslen =
  |\rs|$ is the length of $\rs$. The reminder sequence $\rs = \rp_{1}
  \cdots \rp_{\rslen}$ \textbf{ends with $\var$} if $\rp_{\rslen}.\cur
  = \var$. The variable $\var$ \textbf{occurs $\idxo$ times in $\rs$}
  if $|\{\idxt \in \nat \mid 1\le \idxt \le \rslen,\rp_{\idxt}.\cur =
  \var\}| = \idxo$.
\end{definition}

Given a \ac{CFG} $\cfg$, we define a finite state automaton $\aut(\cfg)$
by describing its states and transition relation below.
\begin{definition}
  \label{def:ReminderAut}
  Let $\cfg$ be a \ac{CFG}. Then $\aut( \cfg)$ is a finite state
  automaton whose initial state is the reminder sequence consisting of
  the single reminder pair $(\macaxiom, \emptyset)$, where $\macaxiom$
  is the axiom of $\cfg$. A reminder sequence $\rs$ is a state of
  $\aut( \cfg)$ if for any variable $\var \in \vars$, $\var$ occurs at
  most twice in $\rs$ and it is reachable from $(\macaxiom,
  \emptyset)$ by the transition relation $\xLongrightarrow{}$
  specified below. In the following, $\rs$ could be the empty sequence
  $\epsilon$ too.
  \begin{enumerate}
    \item \label{it:eext} If $\rs\cdot (\var, \emptyset)$ is a
      reminder sequence and $\var \rightsquigarrow \word_{0} \var_{1}
      \word_{1} \cdots \var_{\prlen} \word_{\prlen}$ is a production
      with $\prlen \ge 1$, then $\rs\cdot (\var, \emptyset)
      \xLongrightarrow{ \word_{0} \word_{1} \cdots \word_{\prlen}} \rs
      \cdot (\var_{\idxt}, \Lbrack \var_{1}, \dots, \var_{\prlen}
      \Rbrack \ominus \Lbrack \var_{\idxt} \Rbrack)$ for every $1\le
      \idxt \le \prlen$.
    \item \label{it:neext} If $\rs\cdot (\var, \varmults)$ is a reminder
      sequence, $\varmults \ne \emptyset$ and $\var \rightsquigarrow
      \word_{0} \var_{1} \word_{1} \cdots \var_{\prlen}
      \word_{\prlen}$ is a production with $\prlen \ge 1$, then
      $\rs\cdot (\var, \varmults) \xLongrightarrow{ \word_{0} \word_{1}
      \cdots \word_{\prlen}} \rs \cdot ( \var, \varmults) \cdot
      (\var_{\idxt}, \Lbrack \var_{1}, \dots, \var_{\prlen} \Rbrack
      \ominus \Lbrack \var_{\idxt} \Rbrack)$ for every $1\le \idxt \le
      \prlen$.
    \item \label{it:erepl} If $\var \rightsquigarrow \word$ is a
      production, then $\rs \cdot (\var, \emptyset)
      \xLongrightarrow{\word} \rs \cdot (\bot, \emptyset)$.
    \item \label{it:nerepl} If $\var \rightsquigarrow \word$ is a
      production, then $\rs \cdot (\var, \varmults \oplus \Lbrack \var'
      \Rbrack) \xLongrightarrow{\word} \rs \cdot (\var', \varmults)$ for
      every $\var' \in \vars$.
    \item \label{it:shorten} $\rs \cdot (\var, \varmults \oplus \Lbrack
      \var' \Rbrack) \cdot (\bot, \emptyset)
      \xLongrightarrow{\epsilon} \rs \cdot ( \var', \varmults)$ for every
      $\var' \in \vars$.
  \end{enumerate}
  The final state of $\aut( \cfg)$ is $(\bot, \emptyset)$.
\end{definition}

If $\rs$ is a reminder sequence, we denote by $\rs\{\idxo\} = \{\var
\in \vars \mid \rp_{\idxo}.\folup \oplus \Lbrack \rp_{\idxo}.\cur
\Rbrack (\var) \ge 1\}$ the set of those variables that occur in the
$\idxo$\textsuperscript{th} reminder pair of $\rs$. The following
lemma is the motivation for using the treewidth of the reminder graph
to define regularity width.

\begin{lemma}
  \label{lem:autStateRemGraphClique}
  Let $\cfg$ be a \ac{CFG} with degree $\degree$ and $\rs$ be a state
  of $\aut ( \cfg)$.  The set of variables $\bigcup_{1\le \idxo \le
  |\rs|}\rs\{ \idxo\}$ induces a clique in the reminder graph of
  $\cfg$. For any $1\le \idxo \le |\rs|$, $\sum_{\var \in
  \vars}\rp_{\idxo}.\folup(\var) \le \degree$.
\end{lemma}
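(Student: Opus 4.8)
The statement has two parts, and I would prove both by induction on the number of transitions used to reach the state $\rs = \rp_{1} \cdots \rp_{\rslen}$ from the initial state $(\macaxiom, \emptyset)$. The followup bound is the easy half: a followup is only ever \emph{created} in rules~\ref{it:eext} and~\ref{it:neext}, where it is set to $\Lbrack \var_{1}, \dots, \var_{\prlen} \Rbrack \ominus \Lbrack \var_{\idxt} \Rbrack$ and so has exactly $\prlen - 1$ elements; since $\prlen \le \degree + 1$ for every production, this is at most $\degree$. Rules~\ref{it:nerepl} and~\ref{it:shorten} only remove an element from a followup, rule~\ref{it:erepl} introduces the empty one, and rule~\ref{it:eext}/\ref{it:neext} overwrite or append. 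Hence $\sum_{\var \in \vars} \rp_{\idxo}.\folup(\var) \le \degree$ is preserved by every transition, which is the second claim.

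For the clique I would carry three auxiliary invariants through the same induction. (B) Every \emph{buried} pair (each $\rp_{\idxo}$ with $\idxo < \rslen$) has a nonempty followup. (S) For every pair $\rp_{\idxo}$ with $\rp_{\idxo}.\folup \ne \emptyset$ there is a single production, with $\prlen \ge 2$ variables on its right-hand side, containing $\rp_{\idxo}.\cur$ together with every variable of $\rp_{\idxo}.\folup$ among those variables, i.e.\ as mutual siblings. (R) For every $\idxo \ge 2$ and every $x \in \rs\{\idxo\}$ we have $\rp_{\idxo - 1}.\cur \reachrel x$. Invariant (B) holds because a pair can acquire a pair above it only through rule~\ref{it:neext}, whose side condition $\varmults \ne \emptyset$ forces the buried pair's followup to be nonempty at that instant, and a buried pair is frozen thereafter, since every rule modifies only the topmost pair, except rule~\ref{it:shorten}, which deletes the top and reactivates the pair immediately below it. Invariant (S) is immediate at creation (the witness is the production applied in rule~\ref{it:eext}/\ref{it:neext}) and survives rules~\ref{it:nerepl} and~\ref{it:shorten}, which merely promote a followup variable to the new current or delete followup elements, keeping everyone a sibling under the same production. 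For (R) the only substantial cases are rule~\ref{it:neext}, where the appended pair's variables are literally children of $\rp_{\rslen}.\cur$, and rule~\ref{it:eext}, where the new top variables are children of the old current $\var$, for which (R) already gives $\rp_{\rslen - 1}.\cur \reachrel \var$; transitivity of $\reachrel$ closes the step, and rules~\ref{it:erepl},~\ref{it:nerepl},~\ref{it:shorten} only shrink a pair's variable set so inherit (R).

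I would then read off the clique from (B), (S) and (R). Take distinct variables $a, b \in \bigcup_{\idxo} \rs\{\idxo\}$, say $a \in \rs\{\idxo\}$ and $b \in \rs\{\idxh\}$ with $\idxo \le \idxh$. If $\idxo = \idxh$ then $a$ and $b$ are mutual siblings by (S), so the sibling edges of~\eqref{eq:siblingEdge} give $(a, b) \in \edges$. If $\idxo < \idxh$, then pair $\idxo$ is buried, so (B) makes its followup nonempty and (S) exhibits a production with $\prlen \ge 2$ having $\rp_{\idxo}.\cur$ among its children (and, when $a$ lies in $\rp_{\idxo}.\folup$, also $a$, a sibling of $\rp_{\idxo}.\cur$). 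Iterating (R) along $\idxo, \idxo+1, \dots, \idxh - 1$ together with $\rp_{\idxh - 1}.\cur \reachrel b$ yields $\rp_{\idxo}.\cur \reachrel b$, and the descendant-edge rule~\eqref{eq:descendandEdge}, applied to that production with the reaching sibling taken to be $\rp_{\idxo}.\cur$ and the target child taken to be $a$, then gives $(b, a) \in \edges$.

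The hard part will be precisely this cross-pair case: one must argue that variables sitting in different and possibly distant reminder pairs are nonetheless adjacent in $\remgr{\cfg}$, and the mechanism that makes this go through is the coincidence recorded in (B), that a buried pair always retains a pending sibling, which is exactly what lets the descendant edges of~\eqref{eq:descendandEdge} fire. A minor bookkeeping point to keep straight throughout is that $\bot$ can occur only as the current of the topmost pair (no rule buries a pair whose current is $\bot$, as rule~\ref{it:neext} would require that pair to have a nonempty followup), so every current entering the reachability chain used in (R) is a genuine variable.
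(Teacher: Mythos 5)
Your proposal is correct and follows essentially the same route as the paper: the paper's proof is also an induction on the number of transitions establishing exactly your invariants --- your (S) is its claim that every pair with nonempty followup sits inside one production with $\prlen \ge 2$, your (B) is the contrapositive of its claim that an empty followup forces the pair to be last, and your (R) is a local form of its reachability claim (which it states globally for all $\idxo < \idxt$ rather than closing adjacent steps by transitivity), with the degree bound handled as an easy side observation in both. The final derivation of the clique, using \eqref{eq:siblingEdge} within a pair and \eqref{eq:descendandEdge} across pairs via the buried pair's nonempty followup and the reachability chain, matches the paper's argument.
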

\begin{proof}
  Let $\rs$ be any state of $\aut( \cfg)$ and $1 \le \idxo , \idxt \le
  |\rs|$ be positions of $\rs$. By induction on the minimum number
  $\trlen$ of transition relation pairs $\rs_{1}
  \xLongrightarrow{} \rs_{2}$ that have to be traversed to reach $\rs$
  from $(\macaxiom, \emptyset)$, we will prove the following claims:
  \begin{enumerate}[I]
    \item \label{it:prdnPresent} If $\rp_{\idxo}.\folup \ne
      \emptyset$, then there is some production $\var \rightsquigarrow
      \word_{0} \var_{1} \word_{1} \cdots \var_{\prlen}
      \word_{\prlen}$ such that $\rs\{\idxo\} \subseteq \{\var_{1},
      \dots, \var_{\prlen}\}$ and $\prlen \ge 2$.
    \item \label{it:noSucc} If $\rp_{\idxo}.\folup = \emptyset$, then
      $(\rp_{\idxo}.\cur, \emptyset)$ is the last reminder pair of
      $\rs$.
    \item \label{it:descReach} If $\idxo < \idxt$, then for any
      variable $\var \in \rs\{\idxt\}$, $\rp_{\idxo}.\cur \reachrel
      \var$.
    \item \label{it:numMembFolup} $\sum_{\var \in
      \vars}\rp_{\idxo}.\folup(\var) \le \degree$.
  \end{enumerate}
  In the following, it is clearly seen that claim
  \ref{it:numMembFolup} holds, so it will not be mentioned explicitly.

  \emph{Base case $\trlen = 0$:} Here, $\rs = (\macaxiom, \emptyset)$
  for which all the claims clearly hold.

  \emph{Induction step:} We distinguish between 5 cases depending on
  the type of transition relation (in \defref{def:ReminderAut}) that
  is traversed for the last time to reach $\rs$ from $(\macaxiom,
  \emptyset)$.

  Case \eqref{it:eext}: $\var \rightsquigarrow \word_{0} \var_{1}
  \word_{1} \cdots \var_{\prlen} \word_{\prlen}$ is a production with
  $\prlen \ge 1$ and $\rs_{1}\cdot (\var, \emptyset) \xLongrightarrow{
  \word_{0} \word_{1} \cdots \word_{\prlen}} \rs_{1} \cdot
  (\var_{\idxh}, \Lbrack \var_{1}, \dots, \var_{\prlen} \Rbrack
  \ominus \Lbrack \var_{\idxh} \Rbrack) = \rs$. Claim
  \ref{it:prdnPresent} is satisfied since $\Lbrack \var_{1}, \dots,
  \var_{\prlen} \Rbrack \ominus \Lbrack \var_{\idxh} \Rbrack \ne
  \emptyset$ implies that $\prlen \ge 2$. Claim \ref{it:noSucc} is
  satisfied since by induction hypothesis, none of the reminder pairs
  in $\rs_{1}$ can be of the form $(\var',\emptyset)$ for any $\var' \in
  \vars$. Claim \ref{it:descReach} is satisfied since $\var
  \rightarrow \var'$ for any $\var' \in \{\var_{1}, \dots,
  \var_{\prlen}\}$.

  Case \eqref{it:neext}: $\varmults \ne \emptyset$, $\var
  \rightsquigarrow \word_{0} \var_{1} \word_{1} \cdots \var_{\prlen}
  \word_{\prlen}$ is a production with $\prlen \ge 1$ and
  $\rs_{1}\cdot (\var, \varmults) \xLongrightarrow{ \word_{0} \word_{1}
  \cdots \word_{\prlen}} \rs_{1} \cdot ( \var, \varmults) \cdot
  (\var_{\idxh}, \Lbrack \var_{1}, \dots, \var_{\prlen} \Rbrack
  \ominus \Lbrack \var_{\idxh} \Rbrack) = \rs$. Claim
  \ref{it:prdnPresent} is satisfied since $\Lbrack \var_{1}, \dots,
  \var_{\prlen} \Rbrack \ominus \Lbrack \var_{\idxh} \Rbrack \ne
  \emptyset$ implies that $\prlen \ge 2$. Claim \ref{it:noSucc} is
  satisfied since by induction hypothesis, none of the reminder pairs
  in $\rs_{1}$ can be of the form $(\var',\emptyset)$ for any $\var' \in
  \vars$. Claim \ref{it:descReach} is satisfied since $\var
  \rightarrow \var'$ for any $\var' \in \{\var_{1}, \dots,
  \var_{\prlen}\}$.

  Case \eqref{it:erepl}: $\var \rightsquigarrow \word$ is a
  production and $\rs_{1} \cdot (\var, \emptyset)
  \xLongrightarrow{\word} \rs_{1} \cdot (\bot, \emptyset) = \rs$.
  Claim \ref{it:prdnPresent} is satisfied by any reminder pair in
  $\rs_{1}$ by induction hypothesis and is vacuously true for
  $(\bot, \emptyset)$. Claim \ref{it:noSucc} is satisfied since by
  induction hypothesis, none of the reminder pairs in $\rs_{1}$ can be
  of the form $(\var',\emptyset)$ for any $\var' \in \vars$. Claim
  \ref{it:descReach} is satisfied since it is satisfied in
  $\rs_{1}$ by induction hypothesis and $\bot$ is not a variable.

  Case \eqref{it:nerepl}: $\var \rightsquigarrow \word$ is a
  production and $\rs_{1} \cdot (\var, \varmults \oplus \Lbrack \var'
  \Rbrack) \xLongrightarrow{\word} \rs_{1} \cdot (\var', \varmults) =
  \rs$. By induction hypothesis, there is a production $\var''
  \rightsquigarrow \word_{0} \var_{1} \word_{1} \cdots \var_{\prlen}
  \word_{\prlen}$ with $\prlen \ge 2$ such that $\{\var, \var'\} \cup
  \{\var''' \mid \varmults(\var''') \ge 1\} \subseteq \{\var_{1}, \dots,
  \var_{\prlen}\}$. Hence, $\rs$ satisfies claim \ref{it:prdnPresent}.
  Claim \ref{it:noSucc} is satisfied since by induction hypothesis,
  none of the reminder pairs in $\rs_{1}$ can be of the form
  $(\var''',\emptyset)$ for any $\var''' \in \vars$. Claim
  \ref{it:descReach} is satisfied by induction hypothesis.

  Case \eqref{it:shorten}: $\rs_{1} \cdot (\var, \varmults \oplus \Lbrack
  \var' \Rbrack) \cdot (\bot, \emptyset) \xLongrightarrow{\epsilon}
  \rs_{1} \cdot ( \var', \varmults) = \rs$. By induction hypothesis,
  there is a production $\var'' \rightsquigarrow \word_{0} \var_{1}
  \word_{1} \cdots \var_{\prlen} \word_{\prlen}$ with $\prlen \ge 2$
  such that $\{\var, \var'\} \cup \{\var''' \mid \varmults(\var''') \ge
  1\} \subseteq \{\var_{1}, \dots, \var_{\prlen}\}$. Hence, $\rs$
  satisfies claim \ref{it:prdnPresent}.  Claim \ref{it:noSucc} is
  satisfied since by induction hypothesis, none of the reminder pairs
  in $\rs_{1}$ can be of the form $(\var''',\emptyset)$ for any
  $\var''' \in \vars$. Claim \ref{it:descReach} is satisfied by
  induction hypothesis. This completes the induction step and hence
  the claims \ref{it:prdnPresent}, \ref{it:noSucc},
  \ref{it:descReach} and \ref{it:numMembFolup} are true.

  Now we are ready to prove the lemma. Let $\rs$ be any state. By
  claim \ref{it:prdnPresent}, if $|\rs\{\idxo\}| > 1$, then there is
  some production $\var \rightsquigarrow \word_{0} \var_{1} \word_{1}
  \cdots \var_{\prlen} \word_{\prlen}$ such that $\rs\{\idxo\}
  \subseteq \{\var_{1}, \dots, \var_{\prlen}\}$ and $\prlen \ge 2$. By
  \eqref{eq:siblingEdge} in \defref{def:reminderGraph}, the reminder
  graph has an edge between every pair of variables in $\rs\{\idxo\}$.
  Let $1\le \idxo < \idxt \le |\rs|$. It only remains to prove that
  the reminder graph has an edge between any variable in
  $\rs\{\idxo\}$ and any variable in $\rs\{\idxt\}$. By claim
  \ref{it:noSucc}, $\rp_{\idxo}.\folup \ne \emptyset$ and by claim
  \ref{it:prdnPresent}, there is a production $\var'' \rightsquigarrow
  \word_{0} \var_{1} \word_{1} \cdots \var_{\prlen} \word_{\prlen}$
  with $\prlen \ge 2$ such that $\rs\{\idxo\} \subseteq \{\var_{1},
  \dots, \var_{\prlen}\}$. From claim \ref{it:descReach}, for any
  variable $\var \in \rs\{\idxt\}$, $\rp_{\idxo}.\cur \reachrel \var$.
  Since $\rp_{\idxo}.\cur \in \rs\{\idxo\}$, \eqref{eq:descendandEdge}
  in \defref{def:reminderGraph} implies that the reminder graph has an
  edge between any variable in $\rs\{\idxo\}$ and any variable in
  $\rs\{\idxt\}$.
\end{proof}

If $\rs$ is a state of $\aut( \cfg)$, then any tree decomposition of
the reminder graph of $\cfg$ will have a bag containing all variables
occurring in $\rs$, by \lemref{lem:autStateRemGraphClique} and
\lemref{lem:cliqueTreeDecomp}.  Since no variable can occur more than
twice in $\rs$, this gives us the required upper bound on the size of
$\aut( \cfg)$.
\begin{lemma}
  \label{lem:autSize}
  For a \ac{CFG} $\cfg$ with regularity width $\regms$, degree
  $\degree$ and $\numvars$ variables, the number of states of $\aut (
  \cfg)$ is bounded by a polynomial of $\numvars(\regms^{2\regms
  (\degree + 1)}) \maxtermins |\prodns|$, where $\maxtermins$ is the
  maximum number of terminal occurrences in the right hand side of any
  production and $|\prodns|$ is the number of productions.
\end{lemma}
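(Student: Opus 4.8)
The plan is to count the states of $\aut(\cfg)$ by grouping them according to a single bag of a fixed tree decomposition of the reminder graph $\remgr{\cfg}$ that contains all variables of the state. First I would fix a tree decomposition $(\tree,(\bag_{\tn})_{\tn})$ of $\remgr{\cfg}$ of width $\regms-1$, which exists since $\regms=\tw(\remgr{\cfg})+1$ by \defref{def:reminderGraph}. Using the standard fact that a graph on $\numvars$ vertices has an optimal-width tree decomposition with at most $\numvars$ bags (repeatedly merge a bag into an adjacent superset bag, which never increases the width), I may assume $|\nodes(\tree)|\le\numvars$ and that every bag has at most $\regms$ variables. By \lemref{lem:autStateRemGraphClique} the variables $\bigcup_{1\le\idxo\le|\rs|}\rs\{\idxo\}$ occurring in any state $\rs$ induce a clique, so by \lemref{lem:cliqueTreeDecomp} they all lie in one bag. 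Assigning to each state one such bag gives a map from states into a set of at most $\numvars$ bags, so it suffices to bound, for a fixed bag $\bag$ with $|\bag|\le\regms$, the number of states all of whose variables lie in $\bag$, and then multiply by $\numvars$.

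Fixing such a $\bag$, I would bound separately the length of a reminder sequence and the number of distinct reminder pairs it can use. For the length: each variable of $\bag$ is the $\cur$-component of at most two pairs (a state admits each variable at most twice, \defref{def:ReminderAut}), and $\bot$ occurs at most once and only last — one checks from the rules that $\bot$ is produced only by rule~\eqref{it:erepl} as a trailing $(\bot,\emptyset)$ and is only consumed, never buried, by rule~\eqref{it:shorten}. Hence every state over $\bag$ has length at most $2\regms+1$. For the alphabet of pairs: the $\cur$-component ranges over $\bag\cup\{\bot\}$ ($\le\regms+1$ choices), and by the second assertion of \lemref{lem:autStateRemGraphClique} the $\folup$-component is a multiset over $\bag$ of total weight at most $\degree$, of which there are at most $\binom{\regms+\degree}{\degree}\le(\regms+1)^{\degree}$. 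Thus at most $(\regms+1)^{\degree+1}$ distinct pairs occur, and the number of sequences of length at most $2\regms+1$ over this alphabet is at most $(2\regms+2)(\regms+1)^{(\degree+1)(2\regms+1)}$.

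Multiplying by the at most $\numvars$ bags bounds the number of states by $\numvars(2\regms+2)(\regms+1)^{(\degree+1)(2\regms+1)}$, and it remains to see this is a polynomial of $\numvars\,\regms^{2\regms(\degree+1)}\,\maxtermins\,|\prodns|$. For $\regms\ge2$ I would use $\log_{\regms}(\regms+1)\le2$ and $4\regms+2\le6\regms$ to get $(\regms+1)^{(\degree+1)(2\regms+1)}\le\regms^{6\regms(\degree+1)}=(\regms^{2\regms(\degree+1)})^{3}$; together with $\numvars,2\regms+2\le\poly(\numvars\,\regms^{2\regms(\degree+1)})$ this yields the claimed bound, the factors $\maxtermins$ and $|\prodns|$ only enlarging the target. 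The case $\regms=1$ is degenerate: treewidth $0$ forces $\remgr{\cfg}$ edgeless, so no production has two variable occurrences on its right-hand side (else \eqref{eq:siblingEdge} supplies an edge), whence $\degree\le0$; then no reminder pair has a non-empty $\folup$-component and the total is $\Oh(\numvars)$.

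The crux — and the reason treewidth is the right measure — is the reduction to at most $\numvars$ bags. A naive count would select the clique of variables of a state directly among all size-$\le\regms$ subsets of $\vars$, costing $\binom{\numvars}{\regms}=\numvars^{\Oh(\regms)}$ and giving only an XP bound; routing each clique through a bag of a small tree decomposition replaces this by the linear factor $\numvars$ and isolates the parameter dependence into $\regms^{\Oh(\regms\degree)}$, which is exactly what makes the construction \ac{fpt}. The only mild subtleties are the $\numvars$-bound on the number of bags and the verification that $\bot$ never occurs away from the end; both are routine once the transition rules are inspected.
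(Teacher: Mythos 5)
Your proof follows essentially the same route as the paper's: invoke \lemref{lem:autStateRemGraphClique} and \lemref{lem:cliqueTreeDecomp} to place all variables of a state into a single bag of an optimal tree decomposition with at most $\numvars$ nodes, then count reminder sequences over one bag (length at most roughly $2\regms$, with $\folup$-multisets of weight at most $\degree$ by the second assertion of \lemref{lem:autStateRemGraphClique}) and multiply by $\numvars$. You are in fact more explicit than the paper about the per-bag count and the trailing $\bot$, and the argument is correct; the only soft spot is the $\regms=1$ aside, where a production whose right-hand side repeats a single variable yields only a self-loop under \eqref{eq:siblingEdge}, so $\degree\le 0$ does not quite follow, but this degenerate case does not affect the main bound.
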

\begin{proof}
  By \lemref{lem:autStateRemGraphClique}, the set of all variables
  $\vars_{1}$ occurring in a reminder sequence $\rs$ that is a state
  of $\aut ( \cfg)$ induce a clique in the reminder graph of $\cfg$.
  By \lemref{lem:cliqueTreeDecomp}, any tree decomposition (and hence
  an optimal tree decomposition) of the reminder graph of $\cfg$ has a
  bag that contains all variables in $\vars_{1}$, so $|\vars_{1}| \le
  \regms$.  Since any one variable can occur at most twice in
  $\rs$, there are at most $1 \regms$ reminder pairs in $\rs$.  For
  any graph, there is a optimal tree decomposition in which the number
  of nodes is at most the number of vertices in the graph \cite[Lemma
  11.9]{FG2006}. Consider such a tree decomposition of the reminder
  graph of $\cfg$ where each bag has at most $\regms$ variables and
  there are at most $\numvars$ bags.  The number of reminder sequences
  of lengh at most $2 \regms$ that can be constructed from variables
  in any one bag of this tree decomposition is at most
  $\regms^{2\regms (\degree + 1)}$. Since there are at most $\numvars$
  nodes in this tree decomposition, the number of states in $\aut (
  \cfg)$ is upper bounded by $\numvars (\regms^{2\regms (\degree +
  1)})$.

  The automaton $\aut ( \cfg)$ has transition relations that read
  words. If we need the usual automaton that reads single letters,
  the size will increase by a multiplicative factor that is a
  polynomial in $|\prodns|$ and $\maxtermins$.
\end{proof}

The idea behind $\aut ( \cfg)$ is that it will go down
one path of a parse tree, remembering siblings along the path that
will have to be visited later. In addition, if the automaton goes down
a parse tree $\prstr$ rooted at a variable $\var$,
\lemref{lem:parseTreeReduceRecurrence} is used to make one of the
subtrees of $\prstr$ $\var$-recurrence free so that $\aut ( \cfg)$ can
go down this subtree without having to remember any more siblings
rooted at $\var$. However, the application of
\lemref{lem:parseTreeReduceRecurrence} may introduce recurrences in
other variables. To handle this, we need the notion of a valuation.
The set of multisets over the set of parse trees is denoted by
$\prstrmultss$.
\begin{definition}
  \label{def:autStVal}
  Let $\rs$ be a reminder sequence. A \textbf{valuation} $\val$ for
  $\rs$ is a mapping $\val: \{1, \dots, |\rs|\} \to \prstrmultss$ such
  that for each $1\le \idxo \le |\rs|$, $\val ( \idxo) = \Lbrack
  \prstr_{1}, \dots, \prstr_{\prlen} \Rbrack$ implies
  $\rp_{\idxo}.\folup = \Lbrack \troot( \prstr_{1}), \dots, \troot(
  \prstr_{\prlen}) \Rbrack$. The Parikh image of such a valuation is
  defined as $\prkimg ( \val) = \sum_{1 \le \idxo \le |\rs|,
  \val(\idxo) ( \prstr) \ge 1}\prkimg(\yield ( \prstr) ^{ \val (
  \idxo) ( \prstr)})$ the sum of Parikh images of yields of all
  parse trees occurring in $\val$. A parse tree $\prstr$ is said to
  occur in $\val$ at level $\idxo$ if $\val ( \idxo) (
  \prstr) \ge 1$.
\end{definition}

Analogous to the notion of compact parse trees introduced in
\cite{EGKL2010}, we define the notion of compact valuations.
\begin{definition}
  \label{def:compVal}
  A valuation $\val$ for a reminder sequence $\rs$ is defined to be
  \textbf{compact} if the following properties are satisfied:
  \begin{enumerate}[{CP.}I]
    \item \label{it:cpOccFree} If $\var \in \vars$ is such that
      $\rp_{\idxt}.\cur = \rp_{\idxh}.\cur = \var$, $1 \le \idxt <
      \idxh \le |\rs|$ and $\rp_{\idxh}.\folup \ne \emptyset$, then
      all parse trees occurring in $\val$ at level $\idxh + 1$ or
      higher are $\var$-occurrence free.
    \item \label{it:cpSingleOccFree} For every $1\le \idxo < |\rs|$,
      if all parse trees occurring in $\val$ at level $\idxo + 1$ or
      lower are $\rp_{\idxo}.\cur$-occurrence free, then all parse
      trees occurring in $\val$ at level $\idxo + 2$ or higher are
      $\rp_{\idxo}.\cur$-occurrence free.
    \item \label{it:cpRecFree} For every $1\le \idxo < |\rs|$, if
      there is a parse tree occurring in $\val$ at level $\idxo + 1$
      or lower that is not $\rp_{\idxo}.\cur$-occurrence free, then
      all parse trees occurring in $\val$ at level $\idxo + 2$ or
      higher are $\rp_{\idxo}.\cur$-recurrence free.
  \end{enumerate}
\end{definition}
Now we will try to give some intuition behind the above definition.
Suppose $\aut (\cfg)$ at state $\rs$ is at the root of a parse tree
$\prstr$ whose children are roots of parse trees $\prstr_{1}, \dots,
\prstr_{\prlen}$. Then $\aut ( \cfg)$ will go down one of the
subtrees, say $\prstr_{1}$. This fact is remembered by the last
reminder pair $\rp_{|\rs|}$ by setting $\rp_{|\rs|}.\cur = \troot (
\prstr_{1})$ and $\rp_{|\rs|}.\folup = \Lbrack \troot ( \prstr_{2}),
\dots, \troot(\prstr_{\prlen}) \Rbrack$, which intuitively means that
$\troot ( \prstr_{1})$ is the label of the root of the subtree that is
currently being handled, and subtrees rooted at $\troot ( \prstr_{2}),
\dots, \troot ( \prstr_{\prlen})$ are to be followed up later. The
property CP.\ref{it:cpOccFree} above means that if $\aut ( \cfg)$ has
already seen a variable $\var$ twice in the subtree currently being
handled, first at level $\idxt$ and then at level $\idxh$, then $\var$
will never be seen again in the current subtree (subtrees of the
current subtree will end up in level $\idxh + 1$ or higher). This will
ensure that $\aut ( \cfg)$ will not need reminder sequences that are
too long. To ensure that CP.\ref{it:cpOccFree} is always maintained,
$\aut ( \cfg)$ has to be careful about which subtree to go into at
each stage. This is captured by CP.\ref{it:cpRecFree}: if at stage
$\idxo$, $\aut ( \cfg)$ is at a subtree rooted at $\rp_{\idxo}.\cur$,
then all further recurrences of $\rp_{\idxo}.\cur$ are moved (using
\lemref{lem:parseTreeReduceRecurrence}) into one of the children of
the current subtree and moved into $\rp_{\idxo + 1}.\folup$ to be
followed up later, ensuring that the subtree being handled right now
are free of $\rp_{\idxo}.\cur$-recurrences (and hence any subtree that
gets into $\idxo + 2$ or higher levels are free of
$\rp_{\idxo}.\cur$-recurrences too). The property
CP.\ref{it:cpSingleOccFree} captures the fact that if at stage
$\idxo$, \lemref{lem:parseTreeReduceRecurrence} was not used to push
all $\rp_{\idxo}.\cur$-recurrences into one of the subtrees at $\idxo
+ 1$\textsuperscript{th} stage (so that all parse trees occurring at
level $\idxo + 1$ or lower are $\rp_{\idxo}.\cur$-occurrence free), it
was because none of the subtrees had any occurrence of
$\rp_{\idxo}.\cur$ at all (so that all parse trees occurring at level
$\idxo + 2$ or higher are also $\rp_{\idxo}.\cur$-occurrence free).

If $\rs$ is a state of $\cfg$, $\val$ is a valuation for $\rs$ and $1
\le \idxo \le |\rs|$, then $\rs\restr \idxo$ is the reminder sequence
$\rp_{1} \cdots \rp_{\idxo}$ and $\val \restr \idxo$ is the
restriction of $\val$ to $\{1, \dots, \idxo\}$. We denote by $\vars[
\val \uparrow \idxo] = \{\var \in \vars \mid \exists \idxt \ge \idxo,
\var \text{ occurs in parse tree } \prstr, \val( \idxt) ( \prstr) \ge
1\}$ the set of all variables labelling parse trees that occur in
$\val$ at level $\idxo$ or higher. Similarly, $\vars[ \val \downarrow
\idxo] = \{\var \in \vars \mid \exists \idxt \le \idxo, \var \text{
occurs in parse tree } \prstr, \val( \idxt) ( \prstr) \ge 1\}$ is the
set of all variables labelling parse trees that occur in $\val$ at
level $\idxo$ or lower.
\begin{proposition}
  \label{prop:cpPreserve}
  If $\val$ is a compact valuation for $\rs$, then $\val \restr
  ( |\rs| - 1)$ is a compact valuation for $\rs \restr (|\rs| - 1)$.
\end{proposition}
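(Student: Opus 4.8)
The plan is to observe that restriction touches nothing below the top level, so that both the valuation requirement of \Defref{def:autStVal} and each of the three compactness properties are inherited almost verbatim. Write $\val' = \val \restr (|\rs| - 1)$ and $\rs' = \rs \restr (|\rs| - 1)$, so that $|\rs'| = |\rs| - 1$ and, by definition of restriction, $\val'(\idxo) = \val(\idxo)$ while the reminder pair $\rp_{\idxo}$ of $\rs'$ equals that of $\rs$, for every $1 \le \idxo \le |\rs| - 1$.

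First I would confirm that $\val'$ is indeed a valuation for $\rs'$. This is immediate from \Defref{def:autStVal}: the defining condition relates $\val(\idxo)$ to $\rp_{\idxo}.\folup$ level by level, and since $\val'$ and $\rs'$ agree with $\val$ and $\rs$ on all levels $1 \le \idxo \le |\rs| - 1$, the condition already known for $\val$ carries over unchanged.

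The key observation for the compactness part is that passing from $\val$ to $\val'$ only discards the parse trees sitting at the top level $|\rs|$: for any threshold $\idxh \le |\rs| - 1$, the collection of parse trees occurring in $\val'$ at level $\idxh$ or lower is identical to that of $\val$, whereas the collection occurring at level $\idxh$ or higher is a sub-collection of that of $\val$ (the level-$|\rs|$ trees are removed). Since being $\var$-occurrence free and being $\var$-recurrence free are properties of individual parse trees, whenever every tree in a collection enjoys such a property so does every tree in any sub-collection. With this in hand, each of CP.\ref{it:cpOccFree}, CP.\ref{it:cpSingleOccFree}, CP.\ref{it:cpRecFree} is checked in turn. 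In every case the hypothesis of the property for $\val'$ refers only to levels at most $\idxo + 1 \le |\rs| - 1$ (or, for CP.\ref{it:cpOccFree}, to indices $\idxt < \idxh \le |\rs'|$ and to $\rp_{\idxh}.\folup$), hence is literally the same statement as for $\val$; and the conclusion for $\val'$ concerns trees at level $\idxo + 2$ or higher (respectively $\idxh + 1$ or higher) of $\val'$, which form a sub-collection of the corresponding trees of $\val$ and therefore inherit occurrence- or recurrence-freeness from the conclusion already known for $\val$.

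The only point demanding care, and the main (minor) obstacle, is the bookkeeping of index ranges: each property for $\val'$ quantifies over a strictly smaller range than the same property for $\val$ (namely $\idxo < |\rs'| = |\rs| - 1$ rather than $\idxo < |\rs|$, and similarly $\idxh \le |\rs'|$), so one must check that every index invoked for $\val'$ still lies in the admissible range where the corresponding property of $\val$ is available. Since $|\rs'| < |\rs|$ this containment holds automatically, and no index at level $|\rs|$ is ever invoked, so the argument goes through with no residual cases.
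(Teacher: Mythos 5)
Your proof is correct and follows essentially the same route as the paper's: each of CP.\ref{it:cpOccFree}, CP.\ref{it:cpSingleOccFree} and CP.\ref{it:cpRecFree} is verified directly by noting that the hypothesis for the restricted valuation coincides with (or implies) the hypothesis for $\val$, while the trees it constrains form a sub-collection of those constrained by the corresponding property of $\val$. Your additional check that $\val\restr(|\rs|-1)$ is a valuation for $\rs\restr(|\rs|-1)$ is a harmless extra that the paper leaves implicit.
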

\begin{proof}
  If $\var \in \vars$ is such that $\rp_{\idxt}.\cur =
  \rp_{\idxh}.\cur = \var$, $1 \le \idxt < \idxh \le |\rs|$,
  $\rp_{\idxh}.\folup \ne \emptyset$ and $\idxh \le |\rs| - 1$, then
  by compactness of $\val$, all parse trees occurring in $\val \restr
  ( |\rs| - 1)$ at level $\idxh + 1$ or higher are $\var$-occurrence
  free. Therefore, $\val \restr ( |\rs| - 1)$ satisfies
  CP.\ref{it:cpOccFree}.

  For every $1\le \idxo < |\rs|-1$, if all parse trees occurring in
  $\val$ at level $\idxo + 1$ or lower are
  $\rp_{\idxo}.\cur$-occurrence free, then by compactness of $\val$,
  all parse trees occurring in $\val \restr (|\rs| - 1)$ at level
  $\idxo + 2$ or higher are $\rp_{\idxo}.\cur$-occurrence free.  Hence
  $\val \restr ( |\rs| - 1)$ satisfies CP.\ref{it:cpSingleOccFree}.

  For every $1\le \idxo < |\rs|-1$, if there is a parse tree occurring
  in $\val$ at level $\idxo + 1$ or lower that is not
  $\rp_{\idxo}.\cur$-occurrence free, then by compactness of
  $\val$, all parse trees occurring in $\val \restr ( |\rs| - 1)$ at
  level $\idxo + 2$ or higher are $\rp_{\idxo}.\cur$-recurrence free.
  Hence $\val \restr ( |\rs| - 1)$ satisfies CP.\ref{it:cpRecFree}.
\end{proof}
\begin{lemma}
  \label{lem:valCompactification}
  If a valuation $\val$ for a reminder sequence $\rs$ satisfies
  properties CP.\ref{it:cpOccFree} and CP.\ref{it:singleOccFree} of
  \defref{def:compVal}, then there is a compact valuation $\val'$ for
  $\rs$ such that $\prkimg ( \val') = \prkimg ( \val)$ and for all
  $1\le \idxo \le |\rs|$, $\vars[ \val \downarrow \idxo] \subseteq
  \vars[ \val' \downarrow \idxo]$.
\end{lemma}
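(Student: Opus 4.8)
The plan is to obtain $\val'$ from $\val$ by repeatedly pushing recurrences to shallower levels with \lemref{lem:parseTreeReduceRecurrence}, packaging termination into a minimal-counterexample argument rather than an explicit induction. Call a valuation $\val''$ for $\rs$ \emph{admissible} if it satisfies CP.\ref{it:cpOccFree} and CP.\ref{it:cpSingleOccFree}, has $\prkimg(\val'') = \prkimg(\val)$, and satisfies $\vars[\val \downarrow \idxo] \subseteq \vars[\val'' \downarrow \idxo]$ for all $1 \le \idxo \le |\rs|$. The valuation $\val$ is itself admissible, so this class is nonempty. To each admissible $\val''$ I attach the potential $\Phi(\val'') = \sum_{\idxo} \idxo \sum_{\prstr} \val''(\idxo)(\prstr) \, |\prstr|$, where $|\prstr|$ denotes the number of nodes of $\prstr$. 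Since $\Phi$ takes values in the non-negative integers, its minimum over the admissible class is attained; I fix an admissible $\val'$ realizing it. By admissibility, $\val'$ already meets the three side conditions the lemma demands of its output, so it remains only to prove that $\val'$ satisfies CP.\ref{it:cpRecFree}, that is, that it is compact.

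Suppose toward a contradiction that CP.\ref{it:cpRecFree} fails at some level $\idxo$, and set $\var = \rp_{\idxo}.\cur$. Then some tree $\prstr_2$ occurring at a level $\idxt \le \idxo + 1$ is not $\var$-occurrence free, while some tree $\prstr_1$ occurring at a level $\idxh \ge \idxo + 2$ is not $\var$-recurrence free. These are precisely the hypotheses of \lemref{lem:parseTreeReduceRecurrence} for the pair $(\prstr_1, \prstr_2)$ and the variable $\var$, so I apply it: a $\var$-rooted subtree $\prstr'$ is excised from $\prstr_1$ and grafted into $\prstr_2$, yielding $\prstr_1'$ and $\prstr_2'$ with the same roots, with $\prstr_1'$ now $\var$-recurrence free and with the total Parikh image unchanged. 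Replacing one copy of $\prstr_1$ at level $\idxh$ and one copy of $\prstr_2$ at level $\idxt$ by $\prstr_1'$ and $\prstr_2'$ produces a valuation $\val''$ for the same $\rs$, since roots are preserved, with $\prkimg(\val'') = \prkimg(\val)$. Because $\prstr'$ moves from level $\idxh$ to the strictly smaller level $\idxt$, the set of variables occurring at levels $\le k$ can only grow for each $k$, whence $\vars[\val \downarrow k] \subseteq \vars[\val'' \downarrow k]$; and because this relocation preserves the total number of nodes while sending a nonempty subtree to a smaller level, $\Phi(\val'') < \Phi(\val')$.

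The crux is to verify that $\val''$ remains admissible, i.e., that it still satisfies CP.\ref{it:cpOccFree} and CP.\ref{it:cpSingleOccFree}; this is the main obstacle. The only occurrences the pump changes are those of variables appearing in the excised subtree $\prstr'$, and every such variable occurs at level $\idxh \ge \idxo + 2$ in $\val'$. For CP.\ref{it:cpOccFree}, suppose a variable $\var'$ of $\prstr'$ is forbidden from every level $\ge \trlen + 1$ because $\var'$ is doubled with second occurrence at position $\trlen$ and $\rp_{\trlen}.\folup \ne \emptyset$. Since $\var'$ occurs at level $\idxh$ in $\val'$, property CP.\ref{it:cpOccFree} for $\val'$ forces $\idxh \le \trlen$; hence both endpoints of the move lie strictly below the forbidden region, as $\idxt \le \idxo + 1 \le \idxh - 1 \le \trlen - 1$ and $\idxh \le \trlen$, so no copy of $\var'$ is created at any level $\ge \trlen + 1$. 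For CP.\ref{it:cpSingleOccFree}, fix any level $\idxo^{\ast}$ with $\rp_{\idxo^{\ast}}.\cur$ occurring in $\prstr'$. If the grafting level satisfies $\idxt \le \idxo^{\ast} + 1$, the freshly inserted occurrence falsifies the hypothesis of CP.\ref{it:cpSingleOccFree} at $\idxo^{\ast}$, which then holds vacuously. If instead $\idxt \ge \idxo^{\ast} + 2$, then both $\idxt$ and $\idxh$ exceed $\idxo^{\ast} + 1$ (as $\idxh \ge \idxt + 1$), so the move leaves every level $\le \idxo^{\ast} + 1$ untouched; moreover the hypothesis was already false in $\val'$, since otherwise CP.\ref{it:cpSingleOccFree} would force $\rp_{\idxo^{\ast}}.\cur$ to be absent from level $\idxh \ge \idxo^{\ast} + 2$, contradicting its occurrence in $\prstr'$. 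Removals only delete occurrences and so can never violate either property. Hence $\val''$ is admissible with $\Phi(\val'') < \Phi(\val')$, contradicting the minimality of $\val'$; therefore the $\Phi$-minimal admissible valuation satisfies CP.\ref{it:cpRecFree}, is compact, and is the desired $\val'$.
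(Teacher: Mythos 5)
Your proof is correct, but it takes a genuinely different route from the paper's. The paper proceeds by induction on $|\rs|$ with an explicit three-step procedure: it first eliminates CP.\ref{it:cpRecFree}-violation witnesses whose deeper tree sits at level $|\rs|$, then recursively compactifies the restriction to $|\rs|-1$, and then must iterate, because the recursive call can reintroduce violations at the top level; termination is argued separately at each stage by node counts at level $|\rs|$, and the preservation of CP.\ref{it:cpOccFree} and CP.\ref{it:cpSingleOccFree} must be re-verified after the recursive step. You replace all of this with a single extremal argument: among admissible valuations (same Parikh image, containing $\vars[\val\downarrow\idxo]$, satisfying CP.\ref{it:cpOccFree} and CP.\ref{it:cpSingleOccFree}), pick one minimizing the level-weighted node count $\Phi$, and show that any CP.\ref{it:cpRecFree} violation yields, via \lemref{lem:parseTreeReduceRecurrence}, an admissible valuation of strictly smaller $\Phi$. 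The key obligations are the same in both proofs --- checking that grafting the excised $\var$-rooted subtree at a shallower level cannot break CP.\ref{it:cpOccFree} or CP.\ref{it:cpSingleOccFree} --- and your verification of these is complete: for any variable of the moved subtree, its prior occurrence at the deeper level $\idxh$ together with the already-satisfied properties pins the forbidden regions above $\idxh$, while the graft lands strictly below $\idxh$; variables not occurring in the moved subtree have unchanged occurrence sets, so the corresponding instances of both properties are untouched. What your approach buys is the elimination of the induction/recursion interleaving and of the separate termination arguments for Steps 2 and 3; what it gives up is only the surface form of constructivity, and even that is recovered immediately, since your contradiction step is exactly a $\Phi$-decreasing rewrite rule that can be applied greedily until no violation witness remains.
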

\begin{proof}
  By induction on $|\rs|$. For the base case $|\rs| =1$, the property
  CP.\ref{it:cpRecFree} is vacuously true for $\val$ and hence it is
  compact. 

  For the induction step, we will describe a compactification
  procedure.
  
  Step 1: Suppose for some $1 \le \idxo < |\rs| - 1$, there is a parse
  tree $\prstr_{1}$ occurring in $\val$ at level $\idxo ' \le \idxo +
  1$ that is not $\rp_{\idxo}.\cur$-occurrence free and there is a
  parse tree $\prstr_{2}$ occurring in $\val$ at level $|\rs|$ that is
  not $\rp_{\idxo}.\cur$-recurrence free (we will call such a pair of
  parse trees $(\prstr_{1}, \prstr_{2})$ a \emph{CP.\ref{it:cpRecFree}
  violation witness}). Since there is a path from the root to a leaf
  of $\prstr_{2}$ in which $\rp_{\idxo}.\cur$ occurs twice, we have
  $\prstr_{2} = \prstr_{2}' \cdot \prstr' \cdot \prstr_{2}''$ where
  $\prstr'$ and $\prstr_{2}''$ are rooted at $\rp_{\idxo}.\cur$.
  Similarly, $\prstr_{1} = \prstr_{1}' \cdot \prstr_{1}''$, where
  $\prstr_{1}''$ is rooted at $\rp_{\idxo}.\cur$. Replace $\prstr_{1}$
  by $\prstr_{1}' \cdot \prstr' \cdot \prstr_{1}''$ and $\prstr_{2}$
  by $\prstr_{2}' \cdot \prstr_{2}''$ and let the resulting valuation
  be $\val_{1}$. The insertion of $\prstr'$ into $\prstr_{1}$ will not
  introduce any violation of CP.\ref{it:cpOccFree}: if there are $1
  \le \idxt < \idxh < \idxo'$ with $\rp_{\idxh}.\folup \ne \emptyset$
  and $\rp_{\idxt}.\cur = \rp_{\idxh}.\cur = \var \in \vars$, then
  $\prstr_{2}$ is $\var$-occurrence free and hence so is $\prstr'$.
  The insertion of $\prstr'$ into $\prstr_{1}$ will not introduce any
  violation of CP.\ref{it:cpSingleOccFree}: for any $1 \le \idxt <
  \idxo' - 1$, if all parse trees occurring at level $\idxt + 1$ or
  lower are $\rp_{\idxt}.\cur$-occurrence free, then $\prstr_{2}$ is
  $\rp_{\idxt}.\cur$-occurrence free and hence so is $\prstr'$.  The
  removal of $\prstr'$ from $\prstr_{2}$ also does not introduce any
  violation of CP.\ref{it:cpOccFree} or CP.\ref{it:cpSingleOccFree}
  since removal of subtrees does not introduce new labels. Hence,
  $\val_{1}$ continues to satisfy CP.\ref{it:cpOccFree} and
  CP.\ref{it:cpSingleOccFree}. In addition, $\prkimg( \val_{1}) =
  \prkimg ( \val)$ and for all $1\le \idxt \le |\rs|$, $\vars[ \val
  \downarrow \idxt] \subseteq \vars[ \val_{1} \downarrow \idxt]$.

  Step 2: Iterate step 1 as long as there are CP.\ref{it:cpRecFree}
  violation witnesses. Since each iteration reduces the number of
  nodes in one of the parse trees at level $|\rs|$, this loop will
  stop after a finite number of rounds.  At this stage, suppose we
  have the updated valuation $\val_{\prlen}$ that does not have any
  CP.\ref{it:cpRecFree} violation witnesses and satisfies
  CP.\ref{it:cpOccFree} and CP.\ref{it:cpSingleOccFree}. Also suppose
  that $\Lbrack \prstr_{1}', \dots, \prstr_{\numtr}' \Rbrack$ are the
  trees remaining behind at level $|\rs|$ after removal of subtrees
  from the original ones. The valuation $\val_{\prlen} \restr (
  |\rs|-1)$ for the reminder sequence $\rs \restr ( |\rs| - 1)$
  satisfies CP.\ref{it:cpOccFree} and CP.\ref{it:cpSingleOccFree}
  (refer to proof of \propref{prop:cpPreserve}) and hence by induction
  hypothesis, there is a compact valuation $\val_{\prlen}'$ for $\rs
  \restr ( |\rs| - 1)$ such that $\prkimg ( \val_{\prlen}') = \prkimg
  ( \val_{\prlen} \restr ( |\rs| - 1))$ and for all $1\le \idxt \le
  |\rs|-1$, $\vars[ \val_{\prlen} \restr (|\rs| - 1) \downarrow \idxt]
  \subseteq \vars[ \val_{\prlen}' \downarrow \idxt]$. Let $\val'$ be
  the valuation for $\rs$ such that for each $1 \le \idxo < |\rs|$,
  $\val' ( \idxo ) = \val_{\prlen}' ( \idxo)$ and $\val' ( |\rs|) =
  \Lbrack \prstr_{1}', \dots, \prstr_{\numtr}' \Rbrack$. It is clear
  that $\prkimg ( \val') = \prkimg ( \val)$ and for all $1\le \idxt
  \le |\rs|$, $\vars[ \val \downarrow \idxt] \subseteq \vars[ \val'
  \downarrow \idxt]$. We will now prove that $\val'$ satisfies
  CP.\ref{it:cpOccFree}. If not, there is $\var \in \vars$ such that
  $\rp_{\idxt}.\cur = \rp_{\idxh}.\cur = \var$, $\rp_{\idxh}.\folup =
  \emptyset$ and there is a parst tree $\prstr$ occurring in $\val'$
  at level $\idxh + 1$ or higher that is not $\var$-occurrence free.
  Due to the compactness of $\val_{\prlen}'$, $\prstr$ can not occur
  at level $|\rs| - 1$ or lower, hence $\prstr$ occurs in $\val'$ at
  level $|\rs|$. Then $\prstr$ is one among $\Lbrack \prstr_{1}',
  \dots, \prstr_{\numtr}' \Rbrack$, say $\prstr_{1}'$, which is
  obtained from $\prstr_{1}$ by removing subtrees, where $\prstr_{1}$
  occurs in $\val$ at level $|\rs|$.  Hence, $\prstr_{1}$ is not
  $\var$-occurrence free, contradicting the fact that $\val$ satisfies
  CP.\ref{it:cpOccFree}. Hence, $\val'$ satisfies
  CP.\ref{it:cpOccFree}. We will show that $\val'$ satisfies
  CP.\ref{it:cpSingleOccFree}. If not, for some $1 \le \idxo <|\rs|$,
  all parse trees occurring in $\val'$ at level $\idxo + 1$ or lower
  are $\rp_{\idxo}.\cur$-occurrence free and there is a parse tree
  $\prstr$ occurring at level $\idxo + 2$ or higher that is not
  $\rp_{\idxo}.\cur$-occurrence free. Due to the compactness of
  $\val_{\prlen}'$, $\prstr$ can not occur at level $|\rs| - 1$ or
  lower, hence $\prstr$ occurs in $\val'$ at level $|\rs|$. Then
  $\prstr$ is one among $\Lbrack \prstr_{1}', \dots, \prstr_{\numtr}'
  \Rbrack$, say $\prstr_{1}'$, which is obtained from $\prstr_{1}$ by
  removing subtrees, where $\prstr_{1}$ occurs in $\val$ at level
  $|\rs|$.  Hence, $\prstr_{1}$ is not $\var$-occurrence free,
  contradicting the fact that $\val$ satisfies CP.\ref{it:cpOccFree}
  (recall that $\vars[ \val \downarrow \idxo + 1] \subseteq
  \vars[ \val' \downarrow \idxo + 1]$).
  Hence, $\val'$ satisfies CP.\ref{it:cpSingleOccFree}.

  Step 3: The valuation $\val_{\prlen}$ obtained in step 2 does not
  have CP.\ref{it:cpRecFree} violation witnesses. But $\val'$ is
  obtained from $\val_{\prlen}$ by performing some changes, so $\val'$
  may have CP.\ref{it:cpRecFree} violation witnesses. Iterate step 2
  as long as the resulting $\val'$ has CP.\ref{it:cpRecFree} violation
  witnesses. Since each iteration reduces the number of nodes in parse
  trees occurring at level $|\rs|$, this loop will stop after a finite
  number of rounds. The resulting valuation $\val'$ has no
  CP.\ref{it:cpRecFree} violation witnesses, satisfies
  CP.\ref{it:cpOccFree} and CP.\ref{it:cpSingleOccFree}, $\prkimg (
  \val') = \prkimg ( \val)$ and $\val' \restr (|\rs| - 1)$ is compact.
  In addition, for all $1\le \idxt \le |\rs|$, $\vars[ \val \downarrow
  \idxt] \subseteq \vars[ \val' \downarrow \idxt]$. We will prove that
  $\val'$ satisfies CP.\ref{it:cpRecFree} (and hence compact). Suppose
  not. For some $1 \le \idxo < |\rs|$, there is a parse tree
  $\prstr_{1}$ occurring in $\val'$ at level $\idxo + 1$ or lower that
  is not $\rp_{\idxo}.\cur$-occurrence free and there is a parse tree
  $\prstr_{2}$ occurring at level $\idxo + 2$ or higher that is not
  $\rp_{\idxo}.\cur$-recurrence free. Since $\val' \restr ( |\rs| -
  1)$ is compact, $\prstr_{2}$ can not occur at level $|\rs| - 1$ or
  lower, so it has to occur at level $|\rs|$. But then, $(\prstr_{1},
  \prstr_{2})$ is a CP.\ref{it:cpRecFree} violation witness, a
  contradiction. Hence, $\val'$ satisfies CP.\ref{it:cpRecFree} and
  hence it is compact.
\end{proof}

The automaton $\aut ( \cfg)$ deletes some portions of parse trees and
accepts a sequence of terminals corresponding to the deleted portion.
To track the progress of $\aut ( \cfg)$, we define below a function
$\info$ that measures the amount of information contained in a parse
tree. For convenience, we consider $\bot$ as a special parse tree such
that root is its only node and is rooted at $\bot$. The yield of
$\bot$ is defined to be $\yield( \bot) = \zv$.
\begin{definition}
  \label{def:info}
  The function $\info$ from the set of parse trees to the set of
  natural numbers is defined as follows:
  \begin{itemize}
    \item $\info ( \bot) = 1$.
    \item $\info (\prstr) = 2$ if the root of $\prstr$ is a variable
      and all its children are labelled by terminals. This case
      applies even if the root of $\prstr$ does not have any children.
    \item If a parse tree $\prstr$ has at least one child labelled by
      a variable, then $\info ( \prstr) = 1 + \sum_{1\le \idxo \le
      \prlen}\info ( \prstr_{\idxo})$, where $\prstr_{1}, \dots,
      \prstr_{\prlen}$ are the subtrees of $\prstr$ whose roots are
      children of the root of $\prstr$ and that are labelled by
      variables.
  \end{itemize}
  The function $\info$ is extended to the domain of valuations as
  follows. If $\val$ is a valuation for a reminder sequence $\rs$,
  $\info ( \val) = \sum_{1\le \idxo \le |\rs|,\val( \idxo) ( \prstr)
  \ge 1}\info ( \prstr)\val ( \idxo ) ( \prstr)$.
\end{definition}

To describe runs of our automaton, we maintain a valuation for its
current state apart from the word run over till the current state.
This is formalized in the definition below.
\begin{definition}
  \label{def:config}
  A \textbf{configuration} $\config$ is a tuple $(\rs, \word, \val,
  \prstr)$ where $\rs$ is a state of $\aut ( \cfg)$, $\word \in
  \termins^{*}$ is a word, $\val$ is a compact valuation for $\rs$ and
  $\prstr$ is a parse tree rooted at $\rp_{|\rs|}.\cur$. The size
  $|\config|$ of $\config$ is defined to be $\info ( \val) + \info
  ( \prstr)$.
\end{definition}

Now we will show that the transitions of $\aut ( \cfg)$ can be used to
traverse a parse tree in such a way that the size of reminder
sequences needed never exceed a small bound.  If $\rs$ is a reminder
sequence and $\var \in \vars$ is a variable, we say that $\rs$ ends
with $\var$ when $\rp_{|\rs|}.\cur = \var$. For a parse tree $\prstr$,
$\word ( \prstr)$ is the word over $\termins$ obtained by concatening
the labels of those children of $\troot ( \prstr)$ that are labelled
by terminals. If the root does not have any children or all children
of the root are labelled by variables, then $\word ( \prstr) =
\epsilon$. Immediate subtrees of $\prstr$ are those subtrees of
$\prstr$ whose roots are children of the root of $\prstr$.
The proof of the following lemma is based on the intuition given after
\defref{def:compVal}.
\begin{lemma}
  \label{lem:autRun}
  Suppose $\cfg$ is a \ac{CFG} and $\config = (\rs, \word, \val,
  \prstr)$ is a configuration satisfying the following properties:
  \begin{enumerate}[I]
    \item \label{it:occFree} If $\var \in \vars$ is such that
      $\rp_{\idxt}.\cur = \rp_{\idxh}.\cur = \var$, $1\le \idxt <
      \idxh \le |\rs|$
      and $\rp_{\idxh}.\folup \ne \emptyset$, then all immediate
      subtrees of $\prstr$ are $\var$-occurrence free.
    \item \label{it:singleOccFree} For every $1\le \idxo < |\rs|$, if
      all parse trees occurring in $\val$ at level $\idxo + 1$ or
      lower are $\rp_{\idxo}.\cur$-occurrence free and $\rp_{\idxo +
      1}.\folup \ne \emptyset$, then $\prstr$ is
      $\rp_{\idxo}.\cur$-occurrence free.
    \item \label{it:recFree} For every $1\le \idxo < |\rs|$, if there
      is a parse tree occurring in $\val$ at level $\idxo + 1$ or
      lower that is not $\rp_{\idxo}.\cur$-occurrence free and
      $\rp_{\idxo + 1}.\folup \ne \emptyset$, then
      $\prstr$ is $\rp_{\idxo}.\cur$-recurrence free.
      \newcounter{enumiisaved}
      \setcounter{enumiisaved}{\value{enumi}}
  \end{enumerate}
  If the size of such a configuration $\config$ is greater than
  $1$, then there is another configuration $\config' = (\rs',
  \word \word', \val', \prstr')$ that satisfies the above three
  properties in addition to the following ones:
  \begin{enumerate}[I]
    \setcounter{enumi}{\value{enumiisaved}}
  \item \label{it:autTrans} $\rs \xLongrightarrow{ \word'} \rs'$,
  \item \label{it:sizeDec} $|\config'| < |\config|$ and
  \item \label{it:prkimgPresrv} $\prkimg(\val) + \prkimg( \word) +
    \prkimg ( \yield ( \prstr)) =
    \prkimg( \val') + \prkimg( \word \word') + \prkimg ( \yield
    (\prstr'))$.
  \end{enumerate}
\end{lemma}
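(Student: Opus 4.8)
The plan is to argue by a case analysis on the shape of the current tree $\prstr$ and the last reminder pair $\rp_{|\rs|} = (\var, \varmults)$, with $\var = \rp_{|\rs|}.\cur$ and $\varmults = \rp_{|\rs|}.\folup$, the five cases matching the five transition rules of \Defref{def:ReminderAut}. Intuitively $\config'$ is obtained by performing one step of a top-down traversal of $\prstr$: either we descend into one immediate variable-subtree of $\prstr$ (rules \eqref{it:eext}, \eqref{it:neext}), or, when $\prstr$ has only terminal children, we emit its terminal yield and finish it (rules \eqref{it:erepl}, \eqref{it:nerepl}), or, when $\prstr = \bot$, we pop a pending followup tree off the stack (rule \eqref{it:shorten}). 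The subword $\word'$ that is read is in each case the word $\word(\prstr)$ formed by the terminal children of $\troot(\prstr)$ (or $\epsilon$ in the shorten case), which matches the label on the corresponding transition.

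First I would treat the two terminating cases, which are routine. If $\prstr = \bot$ then $\info(\prstr) = 1$, so $|\config| > 1$ forces $\info(\val) \ge 1$ and (using claim II in the proof of \Lemref{lem:autStateRemGraphClique}, which forbids an earlier pair with empty followup) $\rp_{|\rs|-1}.\folup \ne \emptyset$; applying \eqref{it:shorten} we take a followup tree rooted at the popped variable off level $|\rs|-1$ of $\val$, make it the new $\prstr'$, and delete it from the valuation, so the size drops by exactly $\info(\bot) = 1$. If $\prstr$ has only terminal children then $\info(\prstr) = 2$; depending on whether $\varmults = \emptyset$ we apply \eqref{it:erepl} (setting $\prstr' = \bot$, $\val' = \val$) or \eqref{it:nerepl} (popping a followup tree to become $\prstr'$), and in both cases the size strictly decreases and $\prkimg(\yield(\prstr)) = \prkimg(\word(\prstr))$ gives property \ref{it:prkimgPresrv}.

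The substantive case is descent, where $\troot(\prstr)$ uses a production $\var \rightsquigarrow \word_0 \var_1 \word_1 \cdots \var_\prlen \word_\prlen$ with $\prlen \ge 1$ and immediate variable-subtrees $\prstr_1, \dots, \prstr_\prlen$. Here I would first choose the subtree to descend into and, if necessary, apply \Lemref{lem:parseTreeReduceRecurrence} (iterated) to relocate all $\var$-recurrences out of that subtree into a sibling, producing a cleaned subtree that is $\var$-recurrence free while leaving the roots $\var_1, \dots, \var_\prlen$ unchanged. Setting $\prstr'$ to be this cleaned child and recording the remaining siblings as the new top-level followup, I apply \eqref{it:eext} when $\varmults = \emptyset$ (overwriting the last pair) and \eqref{it:neext} when $\varmults \ne \emptyset$ (pushing a new pair, retaining the old followup trees at level $|\rs|$). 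Property \ref{it:autTrans} then holds because the new followup multiset equals $\Lbrack \var_1, \dots, \var_\prlen \Rbrack \ominus \Lbrack \var_j \Rbrack$ (roots are preserved by the lemma), and property \ref{it:prkimgPresrv} holds because \Lemref{lem:parseTreeReduceRecurrence} preserves total Parikh image and $\prkimg(\yield(\prstr))$ splits as $\prkimg(\word(\prstr))$ plus the Parikh images of the variable-subtrees. For property \ref{it:sizeDec}, the decomposition $\info(\prstr) = 1 + \sum_i \info(\prstr_i)$ loses the additive $1$ when $\prstr$ is split into its current-child and followup-siblings, and one checks that a single application of \Lemref{lem:parseTreeReduceRecurrence} leaves $\sum_i \info(\prstr_i)$ unchanged: moving a $\var$-rooted subtree keeps both its gluing endpoints attached to variable nodes, so no node switches into or out of the $\info = 2$ leaf-level case, giving a net strict decrease of $1$.

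The hard part will be re-establishing properties \ref{it:occFree}, \ref{it:singleOccFree} and \ref{it:recFree} (and the compactness of $\val'$ required by \Defref{def:config}) for the descent case. The delicate point is that relocating a $\var$-subtree into a sibling followup tree can create new variable occurrences at levels where \ref{it:occFree} or \ref{it:singleOccFree} forbid them, and that the new current pair $(\var_j, \cdot)$ may coincide with an earlier occurrence of $\var_j$ as a current variable, activating a fresh instance of \ref{it:occFree}. I would handle this by choosing the descent-child and recurrence-target so that $\prstr'$ is made $\var$-recurrence free exactly when some tree at level $\le |\rs|+1$ fails to be $\var$-occurrence free, which is precisely when property \ref{it:recFree} for the new stack demands it, using the hypotheses \ref{it:occFree}--\ref{it:recFree} on $\config$ together with the compactness of $\val$ to show that the relocated subtree is already free of the forbidden variables; where the naive $\val'$ fails compactness, I would apply \Lemref{lem:valCompactification} to replace it by a compact valuation with the same Parikh image, noting that that lemma only enlarges each $\vars[\val' \downarrow \idxo]$ and so cannot destroy the occurrence-freeness guarantees already secured. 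This bookkeeping across all stack levels, rather than any single estimate, is the crux of the argument.
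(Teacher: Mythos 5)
Your proposal follows essentially the same route as the paper's proof: the same five-way case split aligned with the transition rules of \Defref{def:ReminderAut}, the same use of \lemref{lem:parseTreeReduceRecurrence} to make the chosen descent child occurrence-free when possible and recurrence-free otherwise with respect to the variable that property \ref{it:recFree} of the \emph{new} configuration demands, and the same appeal to \lemref{lem:valCompactification} together with the monotonicity of $\vars[\cdot \downarrow \idxo]$ to restore compactness while transferring properties \ref{it:occFree}--\ref{it:recFree} from $\config$ to $\config'$. The only slip is notational (in the descent discussion you momentarily clean with respect to $\rp_{|\rs|}.\cur$ even in the rule-\eqref{it:eext} case, where the relevant variable is the one ending $\rs_{1}$), but your final paragraph states the correct criterion, so this is not a substantive gap.
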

\begin{proof}[Proof of \lemref{lem:autRun}]
  Let $\rs = \rs_{1} \cdot (\var, \varmults)$ with $\rs_{1}$ possibly
  equal to the empty sequence $\epsilon$ and $\var$ possibly equal to
  the special symbol $\bot$. We distinguish between cases based on
  whether $\varmults = \emptyset$ and whether some children of $\troot
  ( \prstr)$ are labelled with variables. In all the following cases,
  properties \ref{it:autTrans}, \ref{it:sizeDec} and
  \ref{it:prkimgPresrv} are clearly satisfied, so they will not be
  mentioned explicitly.

  Case 1: $\varmults = \emptyset$ and all children of the root of $
  \prstr$ are labelled with terminals. In this case, $\var
  \rightsquigarrow \word ( \prstr)$ is a production of $\cfg$. We can
  take $\rs' = \rs_{1} \cdot (\bot, \emptyset)$, $\word' = \word (
  \prstr)$, $\val' = \val$ and $\prstr' = \bot$. The new configuration
  $\config'$ satisfies properties \ref{it:occFree}, \ref{it:singleOccFree}
  and \ref{it:recFree} since $\prstr' = \bot$.

  Case 2: $\varmults = \emptyset$ and at least one child of the root
  of $\prstr$ is labelled with a variable. In this case, $\var
  \rightsquigarrow \word_{0} \var_{1} \word_{1} \cdots \var_{\prlen}
  \word_{\prlen}$ is a production of $\cfg$ such that $\word ( \prstr)
  = \word_{0} \cdots \word_{\prlen}$. Let $\prstr_{1}, \dots,
  \prstr_{\prlen}$ be the immediate subtrees of $\prstr$ rooted at
  $\var_{1}, \dots, \var_{\prlen}$ respectively. If $\prlen \ge 2$ and
  $\rs_{1}$ ends with $\var' \in \vars$ such that one of the trees
  $\prstr_{1}, \dots, \prstr_{\prlen}$ is $\var'$-occurrence free,
  that one
  should be chosen as $\prstr'$ for the next configuration $\config'$.
  If $\prlen \ge 2$ and $\rs_{1}$ ends with $\var' \in \vars$ such
  that none of the the trees among $\prstr_{1}, \dots,
  \prstr_{\prlen}$ are $\var'$-occurrence free, make one of them
  $\var'$-recurrence free by applying
  \lemref{lem:parseTreeReduceRecurrence} and choose that as $\prstr'$
  for the next configuration $\config'$. Assume \ac{wlog} that if $\prlen \ge
  2$ and $\rs_{1}$ ends with $\var'$, then $\prstr_{1}$ is
  $\var'$-occurrence free when possible and $\var'$-recurrence free
  otherwise. Let $\rs' = \rs_{1} \cdot (\var_{1}, \Lbrack \var_{2},
  \dots, \var_{\prlen} \Rbrack)$. If variable $\var_{1}$ occurs more
  than once in $\rs_{1}$, then $\config$ would violate property
  \ref{it:occFree} that $\prstr_{1}$ is $\var_{1}$-occurrence free.
  Hence $\var_{1}$ occurs at most once in $\rs_{1}$ and hence $\rs'$
  is a state of $\aut ( \cfg)$. Let $\val_{1}$ be such that $\val_{1}
  \restr ( |\rs|-1) = \val \restr ( |\rs| - 1)$ and $\val_{1} ( |\rs|)
  = \Lbrack \prstr_{2}, \dots, \prstr_{\prlen} \Rbrack$. This new
  valuation $\val_{1}$ satisfies CP.\ref{it:cpOccFree}: since
  $\val_{1} \restr ( |\rs| - 1)$ already satisfies
  CP.\ref{it:cpOccFree} (due to compactness of $\val$), it is enough
  to observe that for any $\var' \in \vars$ with $\rp_{\idxt}.\cur =
  \rp_{\idxh}.\cur = \var'$ and $1 \le \idxt < \idxh \le |\rs_{1}|$,
  we have that $\prstr_{1}, \dots,
  \prstr_{\prlen}$ are $\var'$-occurrence free since $\config$
  satisfies property \ref{it:occFree}. The new valuation
  $\val_{1}$ satisfies CP.\ref{it:cpSingleOccFree}: since
  $\val_{1} \restr ( |\rs| - 1)$ already satisfies
  CP.\ref{it:cpSingleOccFree} (due to compactness of $\val$), it is enough
  to observe that for any $\idxt \le |\rs_{1}| - 1$, if all parse trees
  occurring in $\val_{1}$ at level $\idxt + 1$ or lower are
  $\rp_{\idxt}.\cur$-occurrence free, then so are all parse trees
  occurring in $\val$ at level $\idxt + 1$ or lower, and hence,
  $\prstr_{1}, \dots, \prstr_{\prlen}$ are
  $\rp_{\idxt}.\cur$-occurrence free since $\config$ satisfies
  property \ref{it:singleOccFree}.
  Let $\val'$ be the compact
  valuation given by \lemref{lem:valCompactification} such that
  $\prkimg ( \val') = \prkimg ( \val_{1})$. We take $\rs' = \rs_{1}
  \cdot ( \var_{1}, \Lbrack \var_{2}, \dots, \var_{\prlen} \Rbrack)$,
  $\word' = \word ( \prstr)$, $\val'$ is the one obtained above by
  applying \lemref{lem:valCompactification} and $\prstr' =
  \prstr_{1}$.

  The new configuration $\config' = (\rs', \word \word', \val',
  \prstr')$ given above satisfies property \ref{it:occFree}: let
  $\var' \in \vars$ be such that $\rp_{\idxt}.\cur = \rp_{\idxh}.\cur
  = \var'$, $1 \le \idxt < \idxh \le |\rs'|$ and $\rp_{\idxh}.\folup
  \ne \emptyset$. If $\idxh \le |\rs_{1}|$, then $\prstr_{1}$ is
  $\var'$-occurrence free since $\config$ satisfies property
  \ref{it:occFree}. If $\idxh = |\rs'|$ (in which case $\var' =
  \var_{1}$), $\idxt < |\rs_{1}|$ and all parse trees occurring in
  $\val$ at level $\idxt + 1$ or lower are $\var'$-occurrence free,
  then since $\config$ satisfies property \ref{it:singleOccFree},
  $\prstr$ is $\var'$-occurrence free and so is $\prstr_{1}$. If
  $\idxh = |\rs'|$, $\idxt < |\rs_{1}|$ and a parse tree occurring in
  $\val$ at level $\idxt + 1$ or lower is not $\var'$-occurrence free,
  then since $\config$ satisfies property \ref{it:recFree}, $\prstr$
  is $\var'$-recurrence free and hence immediate subtrees of
  $\prstr_{1}$ are $\var'$-occurrence free (since $\prstr_{1}$ is
  rooted at $\var' = \var_{1}$). If $\idxh = |\rs'|$ and $\idxt =
  |\rs_{1}|$, $\rp_{\idxh}.\folup \ne \emptyset$ implies that we chose
  $\prstr_{1}$ to be $\rp_{\idxt}.\cur$-recurrence free at the
  beginning of this case, hence immedate subtrees of $\prstr_{1}$ are
  $\rp_{\idxt}.\cur$-occurrence free. The new configuration also
  satisfies property \ref{it:singleOccFree}: suppose for any $1 \le
  \idxo < |\rs'| $, all parse trees occurring in $\val'$ at level
  $\idxo + 1$ or lower are $\rp_{\idxo}.\cur$-occurrence free and
  $\rp_{\idxo + 1}.\cur \ne \emptyset$.  If $\idxo < |\rs'| - 1$, then
  $\vars[ \val \downarrow \idxo + 1] = \vars [ \val_{1} \downarrow
  \idxo + 1] \subseteq \vars [ \val' \downarrow \idxo + 1]$ implies
  that all parse trees occurring in $\val$ at level $\idxo + 1$ or
  lower are $\rp_{\idxo}.\cur$-occurrence free.  Since $\config$
  satisfies property \ref{it:singleOccFree}, $\prstr$ is
  $\rp_{\idxo}.\cur$-occurrence free and so is $\prstr_{1}$. If $\idxo
  = |\rs'| - 1$, $\prlen \ge 2$ implies that $\prstr_{1}$ is
  $\rp_{\idxo}.\cur$-occurrence free by the choice of $\prstr_{1}$
  made at the beginning of this case. If $\idxo = |\rs'| - 1$, $\prlen
  = 1$ implies that $\rp_{\idxo + 1}.\folup = \emptyset$ so this case
  is not applicable.  Finally, the new configuration satisfies
  property \ref{it:recFree}: for some $1 \le \idxo < |\rs'|$, suppose
  there is a parse tree occurring in $\val'$ at level $\idxo + 1$ or
  lower that is not $\rp_{\idxo}.\cur$-occurrence free and
  $\rp_{\idxo + 1}.\folup \ne \emptyset$. If $\idxo <
  |\rs_{1}|$ and a parse tree occurring in $\val$ at level $\idxo + 1$
  or lower is not $\rp_{\idxo}.\cur$-occurrence free, then since
  $\config$ satisfies property \ref{it:recFree}, $\prstr$ is
  $\rp_{\idxo}.\cur$-recurrence free and hence so is $\prstr_{1}$. If
  $\idxo < |\rs_{1}|$ and all parse trees occurring in $\val$ at level
  $\idxo + 1$ or lower are $\rp_{\idxo}.\cur$-occurrence free, then
  since $\config$ satisfies property \ref{it:singleOccFree}, $\prstr$
  is $\rp_{\idxo}.\cur$-occurrence free and so is $\prstr_{1}$. If
  $\idxo = |\rs_{1}|$ and $\prlen = 1$, then $\rp_{\idxo + 1}.\folup =
  \emptyset$ so this case does not apply. If $\idxo =
  |\rs_{1}|$ and $\prlen \ge 2$, then $\prstr_{1}$ is
  $\rp_{\idxo}.\cur$-recurrence free by the choice of $\prstr_{1}$ we
  made at the beginning of this case.

  Case 3: $\rs = \rs_{1} \cdot ( \var, \Lbrack \var_{1}, \dots,
  \var_{\prlen}\Rbrack) \cdot (\bot, \emptyset)$. Let $\val (
  |\rs_{1}| + 1) = \Lbrack \prstr_{1}, \dots, \prstr_{\prlen}\Rbrack$
  such that $\prstr_{1}, \dots, \prstr_{\prlen}$ are rooted at
  $\var_{1}, \dots, \var_{\prlen}$ respectively. If $\prlen \ge 2$ and
  $\rs_{1}$ ends with $\var' \in \vars$ such that one of the trees
  $\prstr_{1}, \dots, \prstr_{\prlen}$ is $\var'$-occurrence free,
  that one
  should be chosen as $\prstr'$ for the next configuration $\config'$.
  If $\prlen \ge 2$ and $\rs_{1}$ ends with $\var' \in \vars$ such
  that none of the the trees among $\prstr_{1}, \dots,
  \prstr_{\prlen}$ are $\var'$-occurrence free, make one of them
  $\var'$-recurrence free by applying
  \lemref{lem:parseTreeReduceRecurrence} and choose that as $\prstr'$
  for the next configuration $\config'$. Assume \ac{wlog} that if $\prlen \ge
  2$ and $\rs_{1}$ ends with $\var'$, then $\prstr_{1}$ is
  $\var'$-occurrence free when possible and $\var'$-recurrence free
  otherwise.
  Let $\rs' =
  \rs_{1} \cdot (\var_{1}, \Lbrack \var_{2}, \dots, \var_{\prlen}
  \Rbrack)$. If variable $\var_{1}$ occurs more than once in
  $\rs_{1}$, it leads to a contradiction since compactness of $\val$
  implies that $\prstr_{1}$ is $\var_{1}$-occurrence free.  Hence
  $\var_{1}$ occurs at most once in $\rs_{1}$ and hence $\rs'$ is a
  state of $\aut ( \cfg)$. Let $\val_{1}$ be such that $\val_{1}
  \restr ( |\rs'|-1) = \val \restr ( |\rs'| - 1)$ and $\val_{1} (
  |\rs'|) = \Lbrack \prstr_{2}, \dots, \prstr_{\prlen} \Rbrack$. This
  new valuation $\val_{1}$ satisfies CP.\ref{it:cpOccFree} and
  CP.\ref{it:cpSingleOccFree} since it is
  obtained from $\val$ by moving subtrees within parse trees occurring
  at level $|\rs'|$ and removing a parse tree from the same level. Let
  $\val'$ be the compact valuation given by
  \lemref{lem:valCompactification} such that $\prkimg ( \val') =
  \prkimg ( \val_{1})$. We take $\rs' = \rs_{1} \cdot ( \var_{1},
  \Lbrack \var_{2}, \dots, \var_{\prlen} \Rbrack)$, $\word' =
  \epsilon$, $\val'$ is the one obtained above by applying
  \lemref{lem:valCompactification} and $\prstr' = \prstr_{1}$.

  The new configuration $\config' = (\rs', \word \word', \val',
  \prstr')$ given above satisfies property \ref{it:occFree}: let
  $\var' \in \vars$ be such that $\rp_{\idxt}.\cur = \rp_{\idxh}.\cur
  = \var'$, $1\le \idxt < \idxh \le |\rs'|$ and $\rp_{\idxh}.\folup
  \ne \emptyset$. If
  $\idxh < |\rs'|$, then $\prstr_{1}$ is $\var'$-occurrence free by
  compactness of $\val$. If $\idxh = |\rs'|$ (in which case $\var' =
  \var_{1}$) and $\idxt = |\rs_{1}|$, then $\rp_{\idxh}.\folup
  \ne \emptyset$ implies that $\prstr_{1}$ is $\var'$-recurrence free
  by the choice of $\prstr_{1}$ we made in the biginning of this case,
  hence immediate subtrees of $\prstr_{1}$ are $\var'$-occurrence
  free. If $\idxh = |\rs'|$, $\idxt < |\rs_{1}|$ and all parse trees
  occurring in $\val$ at level $\idxt + 1$ or lower are
  $\var'$-occurrence free, then $\prstr_{1}$ is $\var'$-occurrence
  free by compactness of $\val$. If $\idxh = |\rs'|$, $\idxt <
  |\rs_{1}|$ and a parse tree occurring in $\val$ at level $\idxt + 1$
  or lower is not $\var'$-occurrence free, then compactness of
  $\val$ implies that $\prstr_{1}$ is $\var'$-recurrence free, so
  immediate subtrees of $\prstr_{1}$ are $\var'$-occurrence free.
  The new configuration also satisfies property
  \ref{it:singleOccFree}: suppose for any $1 \le \idxo < |\rs'| $, all
  parse trees occurring in $\val'$ at level $\idxo + 1$ or lower are
  $\rp_{\idxo}.\cur$-occurrence free and $\rp_{\idxo + 1}.\folup \ne
  \emptyset$. If $\idxo < |\rs'| - 1$, then
  $\vars[ \val \downarrow \idxo + 1] = \vars [ \val_{1} \downarrow
  \idxo + 1] \subseteq \vars [ \val' \downarrow \idxo + 1]$ implies
  that all parse trees occurring in $\val$ at level $\idxo + 1$ or
  lower are $\rp_{\idxo}.\cur$-occurrence free.  By compactness of
  $\val$, $\prstr_{1}$ is
  $\rp_{\idxo}.\cur$-occurrence free. If
  $\idxo = |\rs'| - 1$, $\prlen \ge 2$ implies that $\prstr_{1}$ is
  $\rp_{\idxo}.\cur$-occurrence free by the choice of $\prstr_{1}$
  made at the beginning of this case. If $\idxo =
  |\rs'| - 1$, $\prlen
  = 1$ implies that $\rp_{\idxo + 1}.\folup = \emptyset$, so this case
  does not apply.
  Finally, the new configuration satisfies property \ref{it:recFree}:
  for some $1 \le \idxo < |\rs'|$,
  suppose there is a parse tree occurring in $\val'$ at level $\idxo +
  1$ or lower that is not $\rp_{\idxo}.\cur$-occurrence free. If
  $\idxo < |\rs_{1}|$ and a
  parse tree occurring in $\val$ at level $\idxo + 1$ or lower is not
  $\rp_{\idxo}.\cur$-occurrence free, then by compactness of
  $\val$, $\prstr_{1}$ is $\rp_{\idxo}.\cur$-recurrence
  free. If $\idxo < |\rs_{1}|$ and all
  parse trees occurring in
  $\val$ at level $\idxo + 1$ or lower are
  $\rp_{\idxo}.\cur$-occurrence free, then by compactness of
  $\val$, $\prstr_{1}$ is
  $\rp_{\idxo}.\cur$-occurrence free. If
  $\idxo = |\rs_{1}|$ and $\prlen = 1$, then $\rp_{\idxo + 1}.\folup =
  \emptyset$, so this case does not apply. If $\idxo = |\rs_{1}|$ and
  $\prlen \ge 2$, then $\prstr_{1}$ is $\rp_{\idxo}.\cur$-recurrence
  free by the choice of $\prstr_{1}$ we made at the beginning of this
  case.

  Case 4: $\rs = \rs_{1} \cdot ( \var, \Lbrack \var_{1}, \dots,
  \var_{\prlen}\Rbrack)$, $\prlen \ge 1$ and all children of the root
  of $\prstr$ are labelled by terminals. In this case, $\var
  \rightsquigarrow \word ( \prstr)$ is a production of $\cfg$. Let
  $\val ( |\rs_{1}| + 1) = \Lbrack \prstr_{1}, \dots,
  \prstr_{\prlen}\Rbrack$ such that $\prstr_{1}, \dots,
  \prstr_{\prlen}$ are rooted at $\var_{1}, \dots, \var_{\prlen}$
  respectively. If $\prlen \ge 2$ and $\rs_{1}$ ends with $\var' \in
  \vars$ such that one of the trees $\prstr_{1}, \dots,
  \prstr_{\prlen}$ is $\var'$-occurrence free, that one should be
  chosen as $\prstr'$ for the next configuration $\config'$.  If
  $\prlen \ge 2$ and $\rs_{1}$ ends with $\var' \in \vars$ such that
  none of the the trees among $\prstr_{1}, \dots, \prstr_{\prlen}$ are
  $\var'$-occurrence free, make one of them $\var'$-recurrence free by
  applying \lemref{lem:parseTreeReduceRecurrence} and choose that as
  $\prstr'$ for the next configuration $\config'$. Assume \ac{wlog}
  that if $\prlen \ge 2$ and $\rs_{1}$ ends with $\var'$, then
  $\prstr_{1}$ is $\var'$-occurrence free when possible and
  $\var'$-recurrence free otherwise.
  Let $\rs' =
  \rs_{1} \cdot (\var_{1}, \Lbrack \var_{2}, \dots, \var_{\prlen}
  \Rbrack)$. If variable $\var_{1}$ occurs more than once in
  $\rs_{1}$, it leads to a contradiction since compactness of $\val$
  implies that $\prstr_{1}$ is $\var_{1}$-occurrence free.  Hence
  $\var_{1}$ occurs at most once in $\rs_{1}$ and hence $\rs'$ is a
  state of $\aut ( \cfg)$. Let $\val_{1}$ be such that $\val_{1}
  \restr ( |\rs'|-1) = \val \restr ( |\rs'| - 1)$ and $\val_{1} (
  |\rs'|) = \Lbrack \prstr_{2}, \dots, \prstr_{\prlen} \Rbrack$. This
  new valuation $\val_{1}$ satisfies CP.\ref{it:cpOccFree} and
  CP.\ref{it:cpSingleOccFree} since it is
  obtained from $\val$ by moving subtrees within parse trees occurring
  at level $|\rs'|$ and removing a parse tree from the same level. Let
  $\val'$ be the compact valuation given by
  \lemref{lem:valCompactification} such that $\prkimg ( \val') =
  \prkimg ( \val_{1})$. We take $\rs' = \rs_{1} \cdot ( \var_{1},
  \Lbrack \var_{2}, \dots, \var_{\prlen} \Rbrack)$, $\word' =
  \word ( \prstr)$, $\val'$ is the one obtained above by applying
  \lemref{lem:valCompactification} and $\prstr' = \prstr_{1}$.

  The new configuration $\config' = (\rs', \word \word', \val',
  \prstr')$ given above satisfies property \ref{it:occFree}: let
  $\var' \in \vars$ be such that $\rp_{\idxt}.\cur = \rp_{\idxh}.\cur
  = \var'$, $1 \le \idxt < \idxh \le |\rs'|$ and $\rp_{\idxh}.\folup
  \ne \emptyset$. If
  $\idxh < |\rs'|$, then $\prstr_{1}$ is $\var'$-occurrence free by
  compactness of $\val$. If $\idxh = |\rs'|$ (in which case $\var' =
  \var_{1}$) and $\idxt =
  |\rs_{1}|$, then $\rp_{\idxh}.\folup \ne \emptyset$ implies that
  $\prlen \ge 2$ and hence by the choice of $\prstr_{1}$ made at the
  beginning of this case, $\prstr_{1}$ is $\var'$-recurrence free and
  hence immediate subtrees of $\prstr_{1}$ are $\var'$-occurrence
  free.
  If $\idxh = |\rs'|$, $\idxt < |\rs_{1}|$ and all parse trees
  occurring in $\val$ at level $\idxt +
  1$ or lower are $\var'$-occurrence free, then by compactness of
  $\val$, $\prstr_{1}$ is
  $\var'$-occurrence. If $\idxh = |\rs'|$, $\idxt < |\rs_{1}|$
  and a parse tree occurring in $\val$ at level $\idxt + 1$ or lower
  is not $\var'$-occurrence free, then by compactness of $\val$,
  $\prstr_{1}$ is $\var'$-recurrence free and
  hence immediate subtrees of $\prstr_{1}$ are $\var'$-occurrence free
  (since $\prstr_{1}$ is rooted at $\var' = \var_{1}$).
  The new
  configuration also satisfies property
  \ref{it:singleOccFree}: suppose for any $1 \le \idxo < |\rs'| $, all
  parse trees occurring in $\val'$ at level $\idxo + 1$ or lower are
  $\rp_{\idxo}.\cur$-occurrence free and $\rp_{\idxo + 1}.\folup \ne
  \emptyset$.  If $\idxo < |\rs_{1}|$, then
  $\vars[ \val \downarrow \idxo + 1] = \vars [ \val_{1} \downarrow
  \idxo + 1] \subseteq \vars [ \val' \downarrow \idxo + 1]$ implies
  that all parse trees occurring in $\val$ at level $\idxo + 1$ or
  lower are $\rp_{\idxo}.\cur$-occurrence free. By compactness of
  $\val$, $\prstr_{1}$ is
  $\rp_{\idxo}.\cur$-occurrence free and so is $\prstr_{1}$. If
  $\idxo = |\rs_{1}|$, $\prlen \ge 2$ implies that $\prstr_{1}$ is
  $\rp_{\idxo}.\cur$-occurrence free by the choice of $\prstr_{1}$
  made at the beginning of this case. If $\idxo = |\rs_{1}|$, $\prlen
  = 1$ implies that $\rp_{\idxo + 1}.\folup = \emptyset$ so this case
  does not apply.
  Finally, the new configuration satisfies property
  \ref{it:recFree}:
  for some $1 \le \idxo < |\rs'|$,
  suppose there is a parse tree occurring in $\val'$ at level $\idxo +
  1$ or lower that is not $\rp_{\idxo}.\cur$-occurrence free and
  $\rp_{\idxo + 1}.\folup \ne \emptyset$. If
  $\idxo < |\rs_{1}|$ and a
  parse tree occurring in $\val$ at level $\idxo + 1$ or lower is not
  $\rp_{\idxo}.\cur$-occurrence free, then by compactness of
  $\val$, $\prstr_{1}$ is $\rp_{\idxo}.\cur$-recurrence
  free. If $\idxo < |\rs_{1}|$ and all parse trees occurring in
  $\val$ at level $\idxo + 1$ or lower are
  $\rp_{\idxo}.\cur$-occurrence free, then by compactness of
  $\val$, $\prstr_{1}$ is
  $\rp_{\idxo}.\cur$-occurrence free. If
  $\idxo = |\rs_{1}|$ and $\prlen = 1$, then $\rp_{\idxo +
  1}.\folup = \emptyset$ so this case does not apply. If $\idxo =
  |\rs_{1}|$ and $\prlen
  \ge 2$, then $\prstr_{1}$ is $\rp_{\idxo}.\cur$-recurrence free by
  the choice of $\prstr_{1}$ we made at the beginning of this case.

  Case 5: $\rs = \rs_{1} \cdot ( \var, \Lbrack \var_{1}, \dots,
  \var_{\prlen}\Rbrack)$, $\prlen \ge 1$ and at least one child of the
  root of $\prstr$ is labelled with a variable. In this case, $\var
  \rightsquigarrow \word_{0} \var_{1}' \word_{1} \cdots \var_{\numtr}'
  \word_{\numtr}$ is a production of $\cfg$ such that $\word ( \prstr)
  = \word_{0} \cdots \word_{\numtr}$. Let $\prstr_{1}, \dots,
  \prstr_{\numtr}$ be the immediate subtrees of $\prstr$ rooted at
  $\var_{1}', \dots, \var_{\numtr}'$ respectively. If $\numtr \ge 2$
  and one of the trees $\prstr_{1}, \dots, \prstr_{\numtr}$ is
  $\var$-occurrence free, that one should be chosen as $\prstr'$ for
  the next configuration $\config'$.  If $\numtr \ge 2$ and none of
  the the trees among $\prstr_{1}, \dots, \prstr_{\numtr}$ are
  $\var$-occurrence free, make one of them $\var$-recurrence free by
  applying \lemref{lem:parseTreeReduceRecurrence} and choose that as
  $\prstr'$ for the next configuration $\config'$. Assume \ac{wlog}
  that if $\numtr \ge 2$, $\prstr_{1}$ is $\var$-occurrence free when
  possible and $\var$-recurrence free otherwise. Let $\rs' = \rs_{1}
  \cdot ( \var, \Lbrack \var_{1}, \dots, \var_{\prlen}\Rbrack) \cdot (
  \var_{1}', \Lbrack \var_{2}', \dots \var_{\numtr}' \Rbrack)$. If
  variable $\var_{1}'$ occurs more than once in $\rs_{1} \cdot ( \var,
  \Lbrack \var_{1}, \dots, \var_{\prlen}\Rbrack)$, then $\config$
  would violate property \ref{it:occFree} that $\prstr_{1}$ is
  $\var_{1}'$-occurrence free.  Hence $\var_{1}'$ occurs at most once
  in $\rs_{1} \cdot ( \var, \Lbrack \var_{1}, \dots,
  \var_{\prlen}\Rbrack)$ and hence $\rs_{1} \cdot ( \var, \Lbrack
  \var_{1}, \dots, \var_{\prlen}\Rbrack) \cdot ( \var_{1}', \Lbrack
  \var_{2}', \dots \var_{\numtr}' \Rbrack)$ is a state of $\aut (
  \cfg)$. Let $\val_{1}$ be such that $\val_{1}
  \restr ( |\rs|) = \val \restr ( |\rs|)$ and $\val_{1} ( |\rs| + 1)
  = \Lbrack \prstr_{2}, \dots, \prstr_{\numtr} \Rbrack$. This new
  valuation $\val_{1}$ satisfies CP.\ref{it:cpOccFree}: since
  $\val_{1} \restr ( |\rs|)$ already satisfies
  CP.\ref{it:cpOccFree} (due to compactness of $\val$), it is enough
  to observe that for any $\var' \in \vars$ with $\rp_{\idxt}.\cur =
  \rp_{\idxh}.\cur = \var'$, $1 \le \idxt < \idxh \le |\rs|$ and
  $\rp_{\idxh}.\folup \ne \emptyset$, we have that $\prstr_{1}, \dots,
  \prstr_{\prlen}$ are $\var'$-occurrence free since $\config$
  satisfies property \ref{it:occFree}. This new configuration also
  satisfies CP.\ref{it:cpSingleOccFree}: since
  $\val_{1} \restr ( |\rs|)$ already satisfies
  CP.\ref{it:cpSingleOccFree} (due to compactness of $\val$), it is enough
  to observe that for any $\idxo \le |\rs_{1}|$, if all parse trees
  occurring in $\val_{1}$ at level $\idxo + 1$ or lower are
  $\rp_{\idxo}.\cur$-occurrence free, then by property
  \ref{it:singleOccFree}, $\prstr$ is $\rp_{\idxo}.\cur$-occurrence
  free and hence so are $\prstr_{2}, \dots, \prstr_{\numtr}$. Let
  $\val'$ be the compact
  valuation given by \lemref{lem:valCompactification} such that
  $\prkimg ( \val') = \prkimg ( \val_{1})$. We take $\rs' = \rs_{1}
  \cdot ( \var, \Lbrack \var_{1}, \dots, \var_{\prlen}\Rbrack) \cdot (
  \var_{1}', \Lbrack \var_{2}', \dots \var_{\numtr}' \Rbrack)$,
  $\word' = \word ( \prstr)$, $\val'$ is the one obtained above by
  applying \lemref{lem:valCompactification} and $\prstr' =
  \prstr_{1}$.

  The new configuration $\config' = (\rs', \word \word', \val',
  \prstr')$ given above satisfies property \ref{it:occFree}: let
  $\var' \in \vars$ be such that $\rp_{\idxt}.\cur = \rp_{\idxh}.\cur
  = \var'$, $1 \le \idxt < \idxh \le |\rs'|$ and $\rp_{\idxh}.\folup
  \ne \emptyset$. If
  $\idxh \le |\rs|$, then $\prstr_{1}$ is $\var'$-occurrence free
  since $\config$ satisfies property \ref{it:occFree}. If $\idxh =
  |\rs'|$ (in which case $\var' = \var_{1}'$) and $\idxt = |\rs|$,
  then $\var' = \var_{1}' = \var$. $\rp_{\idxh}.\folup \ne \emptyset$ implies
  that $\numtr \ge 2$ and hence $\prstr_{1}$ is $\var'$-recurrence
  free by the choice of $\prstr_{1}$ made at the beginning of this case.
  Since $\var' = \var_{1}'$ and $\prstr_{1}$ is rooted at
  $\var_{1}'$, immediate subtrees of $\prstr_{1}$ are
  $\var'$-occurrence free. If $\idxh = |\rs'|$, $\idxt < |\rs|$ and
  there is a parse tree occurring in $\val$ at level $\idxt + 1$ or
  lower that is not $\var'$-occurrence free, then
  since $\config$ satisfies proeprty \ref{it:recFree}, $\prstr$ is
  $\var'$-recurrence free and since $\prstr_{1}$ is rooted at
  $\var_{1}' = \var'$, immediate subtrees of $\prstr_{1}$ are
  $\var'$-occurrence free. If $\idxh = |\rs'|$, $\idxt < |\rs|$ and
  all parse trees occurring in $\val$ at levels $\idxt + 1$ or lower
  $\var'$-occurrence free, then since $\config$ satisfies property
  \ref{it:singleOccFree}, $\prstr$ is $\var'$-occurrence free and
  hence so is $\prstr_{1}$.
  The new configuration also satisfies
  property
  \ref{it:singleOccFree}: suppose for any $1 \le \idxo < |\rs'| $, all
  parse trees occurring in $\val'$ at level $\idxo + 1$ or lower are
  $\rp_{\idxo}.\cur$-occurrence free and $\rp_{\idxo + 1}.\folup \ne
  \emptyset$.  If $\idxo < |\rs'| - 1$, then
  $\vars[ \val \downarrow \idxo + 1] = \vars [ \val_{1} \downarrow
  \idxo + 1] \subseteq \vars [ \val' \downarrow \idxo + 1]$ implies
  that all parse trees occurring in $\val$ at level $\idxo + 1$ or
  lower are $\rp_{\idxo}.\cur$-occurrence free.  Since $\config$
  satisfies property \ref{it:singleOccFree}, $\prstr$ is
  $\rp_{\idxo}.\cur$-occurrence free and so is $\prstr_{1}$. For
  $\idxo = |\rs'| - 1$, $\numtr \ge 2$ implies that $\prstr_{1}$ is
  $\rp_{\idxo}.\cur$-occurrence free by the choice of $\prstr_{1}$
  made at the beginning of this case. For $\idxo = |\rs| - 1$, $\numtr
  = 1$ implies that $\rp_{\idxo + 1}.\folup = \emptyset$ and hence
  this case does not apply.
  Finally, the new configuration satisfies property
  \ref{it:recFree}: for some $1 \le \idxo < |\rs'|$,
  suppose there is a parse tree occurring in $\val'$ at level $\idxo +
  1$ or lower that is not $\rp_{\idxo}.\cur$-occurrence free and
  $\rp_{\idxo + 1}.\folup \ne \emptyset$. If $\idxo < |\rs|$ and a
  parse tree occurring in $\val$ at level $\idxo + 1$ or lower is not
  $\rp_{\idxo}.\cur$-occurrence free, then since $\config$ satisfies
  property \ref{it:recFree}, $\prstr$ is $\rp_{\idxo}.\cur$-recurrence
  free and hence so is $\prstr_{1}$. If $\idxo < |\rs|$ and all parse
  trees occurring in
  $\val$ at level $\idxo + 1$ or lower are
  $\rp_{\idxo}.\cur$-occurrence free, then since $\config$ satisfies
  property \ref{it:singleOccFree}, $\prstr$ is
  $\rp_{\idxo}.\cur$-occurrence free and so is $\prstr_{1}$. If
  $\idxo = |\rs|$ and $\numtr= 1$, then $\rp_{\idxo + 1}.\folup =
  \emptyset$ so this case does not apply. If $\idxo = |\rs|$ and $\numtr
  \ge 2$, then $\prstr_{1}$ is $\rp_{\idxo}.\cur$-recurrence free by
  the choice of $\prstr_{1}$ we made at the beginning of this case.
\end{proof}

Now we are ready to prove that for every word $\word$ generated by a
\ac{CFG} $\cfg$, $\aut( \cfg)$ accepts a word $\word'$ such that $\prkimg
( \word) = \prkimg ( \word')$.
\begin{theorem}
  \label{thm:autRunExist}
  If a word $\word$ can be derived in $\cfg$ from the axiom
  $\macaxiom$, then the automaton $\aut ( \cfg)$ accepts some word
  $\word'$ such that $\prkimg ( \word') = \prkimg ( \word)$.
\end{theorem}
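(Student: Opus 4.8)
The plan is to realize an accepting run of $\aut(\cfg)$ as a sequence of configurations whose sizes strictly decrease, starting from a full parse tree of $\word$ and ending at the accepting configuration, all the while preserving Parikh images. \Lemref{lem:autRun} supplies exactly one such size-decreasing step, so the whole argument reduces to bootstrapping an initial configuration, iterating that lemma, and reading off the conclusion at the terminal configuration.

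First I would fix a parse tree $\prstr$ rooted at $\macaxiom$ with $\yield(\prstr) = \word$; such a tree exists because $\word$ is derivable from $\macaxiom$. I take the initial configuration $\config_0 = (\rs_0, \epsilon, \val_0, \prstr)$, where $\rs_0 = (\macaxiom, \emptyset)$ is the initial state of $\aut(\cfg)$ and $\val_0$ is the empty valuation. Since $|\rs_0| = 1$, the three compactness conditions of \Defref{def:compVal} and properties \ref{it:occFree}, \ref{it:singleOccFree}, \ref{it:recFree} of \Lemref{lem:autRun} hold vacuously, and $\prstr$ is rooted at $\rp_{1}.\cur = \macaxiom$ as required; hence $\config_0$ is a legitimate configuration to which \Lemref{lem:autRun} applies.

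Next I would iterate \Lemref{lem:autRun}. As long as the current configuration has size greater than $1$, the lemma produces a successor that again satisfies the three properties, is reached by a valid automaton transition $\rs \xLongrightarrow{\word'} \rs'$, has strictly smaller size, and preserves the combined quantity $\prkimg(\val) + \prkimg(\word) + \prkimg(\yield(\prstr))$ via property \ref{it:prkimgPresrv}. Because this size is a positive integer that strictly decreases at every step, after finitely many steps I reach a configuration $\config_* = (\rs_*, \word_*, \val_*, \prstr_*)$ of size exactly $1$. Concatenating the transition labels yields a single run $\rs_0 \xLongrightarrow{\word_*} \rs_*$ from the initial state, and chaining property \ref{it:prkimgPresrv} across all steps gives $\prkimg(\word) = \prkimg(\val_*) + \prkimg(\word_*) + \prkimg(\yield(\prstr_*))$, using $\prkimg(\val_0) = 0$ and $\yield(\prstr) = \word$.

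The final and most delicate step is identifying this terminal configuration, and it is where I expect the real work to lie. I would argue that $|\config_*| = \info(\val_*) + \info(\prstr_*) = 1$ forces $\info(\prstr_*) = 1$ and $\info(\val_*) = 0$: by \Defref{def:info} every parse tree has $\info \ge 2$ except $\bot$ with $\info(\bot) = 1$, and every nonempty valuation contributes at least $2$ (its trees are rooted at variables). Hence $\prstr_* = \bot$ and $\val_*$ is empty, so every reminder pair of $\rs_*$ has empty followup; by claim \ref{it:noSucc} of \Lemref{lem:autStateRemGraphClique} at most the last reminder pair may have empty followup, which forces $|\rs_*| = 1$ and therefore $\rs_* = (\bot, \emptyset)$, the accepting state. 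Finally $\prstr_* = \bot$ gives $\yield(\prstr_*) = \zv$ and the empty $\val_*$ gives $\prkimg(\val_*) = 0$, so the displayed identity collapses to $\prkimg(\word_*) = \prkimg(\word)$. Taking $\word' = \word_*$ completes the proof; the main obstacle is precisely tying the numeric minimum of the size function back to the syntactic accepting state through $\info$ and claim \ref{it:noSucc}.
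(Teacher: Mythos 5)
Your proposal is correct and follows essentially the same route as the paper: both arguments use the configuration size $|\config|$ as a decreasing measure and invoke \lemref{lem:autRun} repeatedly (you phrase it as forward iteration from the initial configuration, the paper as induction on $|\config|$, which is the same descent). Your closing analysis of the size-$1$ configuration via $\info$ and claim \ref{it:noSucc} of \lemref{lem:autStateRemGraphClique} correctly fills in a step the paper's base case leaves implicit.
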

\begin{proof}
  Let $\config = (\rs, \word, \val, \prstr)$ be a configuration. We
  will prove by induction on $|\config|$ that there is a word
  $\word_{1}$ such that $\rs \xLongrightarrow{\word_{1}}
  (\bot,\emptyset)$ and $\prkimg( \word_{1}) = \prkimg( \val) +
  \prkimg ( \yield ( \prstr))$. For the base case $|\config| = 1$, we
  can take $\word_{1} = \epsilon$.

  For the induction step, suppose $|\config| > 1$. Let
  $\config' = (\rs', \word \word', \val', \prstr')$ be the
  configuration given by \lemref{lem:autRun}. We have $\rs
  \xLongrightarrow{\word'}\rs'$, $\prkimg(\val) +
  \prkimg ( \yield ( \prstr)) = \prkimg( \val') + \prkimg(
  \word') + \prkimg ( \yield (\prstr'))$ and $|\config'| < |\config|$.
  By induction hypothesis, there is a word $\word_{2}$ such that
  $\rs' \xLongrightarrow{\word_{2}} (\bot, \emptyset)$ and
  $\prkimg( \word_{2}) = \prkimg( \val') + \prkimg( \yield (
  \prstr'))$. Putting things together, we get $\rs
  \xLongrightarrow{ \word'} \rs' \xLongrightarrow{ \word_{2}}
  (\bot, \emptyset)$ and $\prkimg( \val) + \prkimg( \yield ( \prstr))
  = \prkimg( \word') + \prkimg( \word_{2})$. Now we can take
  $\word_{1} = \word' \word_{2}$ to complete the induction step.

  Now we will prove the lemma. Let $\prstr$ be a parse tree associated
  with the derivation of $\word$ from $\macaxiom$ so that $\yield
  ( \prstr) = \word$. Consider the configuration $(
  (\macaxiom,\emptyset), \epsilon, \{1 \to \emptyset\}, \prstr)$. From
  the above result, we get a word $\word_{1}$ such that $(\macaxiom,
  \emptyset) \xLongrightarrow{\word_{1}} (\bot, \emptyset)$ and
  $\prkimg ( \word_{1}) = \prkimg ( \{1 \to \emptyset\}) + \prkimg
  ( \yield( \prstr)) = \prkimg ( \word)$.
\end{proof}

Next we will prove the converse direction: if $\aut ( \cfg)$ accepts a
word $\word$, then $\cfg$ can generate a word $\word'$ from
$\macaxiom$ such that $\prkimg ( \word') = \prkimg( \word)$. Given a
reminder sequence $\rs$, we denote by $\Lbrack \rs \Rbrack$ the
multiset over $\vars$ such that for all $\var \in \vars$, $\Lbrack \rs
\Rbrack( \var) = \sum_{1 \le \idxo \le |\rs|}\rp_{\idxo}.\folup (
\var) + \Lbrack \rp_{|\rs|}.\cur \Rbrack ( \var)$. For any word
$\pword$ over $\termins \cup \vars$, $\pword \restr \vars$ ($\pword
\restr \termins$) is the word
obtained from $\pword$ by replacing every occurrence of an element
from $\termins$ ($\vars$) with $\epsilon$ respectively.
\begin{theorem}
  \label{thm:cfgDerExists}
  If if $\aut ( \cfg)$ accepts a word $\word$, then $\cfg$ can
  generate a word $\word'$ from $\macaxiom$ such that $\prkimg (
  \word') = \prkimg( \word)$.
\end{theorem}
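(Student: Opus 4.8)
The plan is to prove the converse of \thmref{thm:autRunExist} by induction on the length of the accepting run of $\aut(\cfg)$, establishing an invariant that relates each reminder sequence $\rs$ visited during the run to a partial derivation in $\cfg$. The key quantity to track is $\Lbrack \rs \Rbrack$, the multiset of variables that still need to be ``expanded'' (the current variables and all followup variables across the sequence). The invariant I would aim for is: if the automaton reaches state $\rs$ having read word $\pword_{\mathrm{read}}$, then there is a sentential form $\pword$ derivable from $\macaxiom$ in $\cfg$ such that $\prkimg(\pword \restr \termins) + \prkimg(\pword_{\mathrm{read}})$ equals $\prkimg$ of the full word eventually read, and the multiset of variable occurrences in $\pword$, namely $\prkimg(\pword \restr \vars)$, is exactly $\Lbrack \rs \Rbrack$. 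In words: the reminder sequence faithfully records which variables are pending in a partial leftmost-style derivation, and the terminals read so far together with the terminals still hidden under the pending variables reconstruct the Parikh image.

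First I would set up the base case: the run starts at $(\macaxiom, \emptyset)$, for which $\Lbrack \rs \Rbrack = \Lbrack \macaxiom \Rbrack$, matching the trivial sentential form $\macaxiom$ derivable from $\macaxiom$ in zero steps. Then I would walk through the five transition types of \defref{def:ReminderAut} and check that each preserves the invariant. For transitions \eqref{it:eext} and \eqref{it:neext}, the automaton reads $\word_0 \word_1 \cdots \word_{\prlen}$ and replaces a pending $\var$ by the variables $\var_1, \dots, \var_{\prlen}$; this corresponds to applying the production $\var \rightsquigarrow \word_0 \var_1 \word_1 \cdots \var_{\prlen} \word_{\prlen}$ to the occurrence of $\var$ in $\pword$, which changes $\prkimg(\pword \restr \vars)$ exactly as $\Lbrack \rs \Rbrack$ changes, and emits the terminal block whose Parikh image accounts for the removed hidden terminals. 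Transitions \eqref{it:erepl} and \eqref{it:nerepl} correspond to applying a terminal-only (or short) production $\var \rightsquigarrow \word$, again matching the multiset bookkeeping. The $\epsilon$-transition \eqref{it:shorten} reads nothing and merely reshuffles which pending variable is ``current''; it leaves $\Lbrack \rs \Rbrack$ unchanged, so the invariant is trivially maintained.

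The crucial observation is that the $\folup$ multisets are never lost: a variable placed into some $\rp.\folup$ is eventually promoted to a $\cur$ position (by \eqref{it:nerepl} or \eqref{it:shorten}) and then expanded, so every pending variable is accounted for in $\Lbrack \rs \Rbrack$ at every step. When the run reaches the final state $(\bot, \emptyset)$, we have $\Lbrack \rs \Rbrack = \emptyset$, meaning $\prkimg(\pword \restr \vars) = \zv$, so the corresponding sentential form $\pword$ is a \emph{word} $\word'$ over $\termins$ with $\pword \restr \termins = \word'$. The invariant then gives $\prkimg(\word') = \prkimg(\word)$, where $\word$ is the full word accepted by $\aut(\cfg)$, and $\word'$ is derivable from $\macaxiom$, which is exactly the conclusion.

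The main obstacle I anticipate is formulating the invariant cleanly enough that all five transition cases fall out uniformly, particularly handling the bookkeeping of hidden terminals correctly. The subtle point is that when a variable sits in a $\folup$ position, the terminals it will eventually produce have not yet been read by the automaton, yet they must be counted in the Parikh image of the eventual accepted word; I would resolve this by phrasing the invariant in terms of the \emph{entire} pending sentential form $\pword$ rather than only what has been read, so that $\prkimg(\pword) = \prkimg(\word_{\mathrm{accepted}})$ holds throughout and the terminal/variable split is recovered at the end. The transitions \eqref{it:shorten} and \eqref{it:nerepl}, which promote a followup variable to current without reading its subtree's terminals, are where this formulation pays off, since they leave both $\Lbrack \rs \Rbrack$ and the associated $\pword$ essentially fixed and the invariant is preserved with minimal effort.
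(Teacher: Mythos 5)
Your overall strategy coincides with the paper's: induct on the length of the run, maintaining a sentential form $\pword$ derivable from $\macaxiom$ whose multiset of variable occurrences is exactly $\Lbrack \rs \Rbrack$, and check the five transition types of \defref{def:ReminderAut}, with \eqref{it:shorten} leaving everything unchanged. However, the terminal half of your invariant is stated in a form that would fail: you require that $\prkimg(\pword \restr \termins) + \prkimg(\pword_{\mathrm{read}})$ equal the Parikh image of the \emph{full} word eventually read. This double-counts (the terminals occurring in a sentential form derived so far are precisely the ones the automaton has already emitted, since each transition reads exactly the terminal block of the production it simulates) and is forward-looking, so already the base case $\pword = \macaxiom$, $\pword_{\mathrm{read}} = \epsilon$ would force the accepted word to be empty. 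The invariant the paper uses, and the one you want, is the equality $\prkimg(\pword \restr \termins) = \prkimg(\pword_{\mathrm{read}})$: each transition of type \eqref{it:eext}--\eqref{it:nerepl} applies the corresponding production to an occurrence of $\var$ in $\pword$ (such an occurrence exists because $\rp_{|\rs|}.\cur = \var$ gives $\Lbrack \rs \Rbrack(\var) \ge 1$, hence $\prkimg(\pword \restr \vars)(\var) \ge 1$ by the variable half of the invariant), adding to $\pword$ exactly the terminals the automaton reads, while \eqref{it:shorten} keeps $\pword$ fixed. With this formulation the ``hidden terminals'' you worry about in your final paragraph disappear: nothing pending needs to be counted, and at the final state $(\bot,\emptyset)$ the variable part of $\pword$ is empty, so $\pword$ is a terminal word with $\prkimg(\pword) = \prkimg(\word)$, as required. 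This is a local repair to the bookkeeping, not a change of approach.
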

\begin{proof}
  We claim that if $(\macaxiom,\emptyset) \xLongrightarrow { \word}
  \rs$, then $\cfg$ can generate a word $\pword$ such that $\Lbrack
  \rs \Rbrack = \prkimg ( \pword \restr \vars)$ and $\prkimg ( \word)
  = \prkimg ( \pword \restr \termins)$. Proof is by induction on
  length $\trlen$ of the run of $\aut ( \cfg)$ on $\word$. For the
  base case $\trlen = 0$, we can take $\pword = \macaxiom$.

  For the induction step, suppose $(\macaxiom, \emptyset)
  \xLongrightarrow{ \word_{1}'} \rs_{1} \xLongrightarrow{\word_{2}'}
  \rs$, where $\rs_{1} \xLongrightarrow{\word_{2}'} \rs$ is one of the
  transition relations used by $\aut ( \cfg)$ (from
  \defref{def:ReminderAut}). By induction hypothesis, $\cfg$ can
  generate a word $\pword'$ such that $\Lbrack
  \rs_{1} \Rbrack = \prkimg ( \pword' \restr \vars)$ and $\prkimg (
  \word_{1}')
  = \prkimg ( \pword' \restr \termins)$.

  Case 1: $\var \rightsquigarrow \word_{0} \var_{1} \word_{1} \cdots
  \var_{\prlen} \word_{\prlen}$ is a production with $\prlen \ge 1$,
  $\word_{2}' = \word_{0}\word_{1} \cdots \word_{\prlen}$ and $\rs_{1}
  = \rs_{2} \cdot (\var, \emptyset) \xLongrightarrow{ \word_{2}'}
  \rs_{2} \cdot (\var_{\idxt}, \Lbrack \var_{1}, \dots,
  \var_{\prlen}\Rbrack \ominus \Lbrack \var_{\idxt} \Rbrack) = \rs$.
  Since $\rp_{|\rs_{1}|}.\cur = \var$, $\prkimg ( \pword' \restr
  \vars) ( \var) \ge 1$. Let $\pword' = \pword_{1} \var
  \pword_{2}$. We can take $\pword = \pword_{1} \word_{0} \var_{1}
  \word_{1} \cdots \var_{\prlen} \word_{\prlen} \pword_{2}$.

  Case 2: $\var \rightsquigarrow \word_{0} \var_{1} \word_{1} \cdots
  \var_{\prlen} \word_{\prlen}$ is a production with $\prlen \ge 1$,
  $\word_{2}' = \word_{0}\word_{1} \cdots \word_{\prlen}$, $\varmults
  \ne \emptyset$ and $\rs_{1} = \rs_{2} \cdot (\var, \varmults)
  \xLongrightarrow{\word_{2}'} \rs_{2} \cdot (\var, \varmults) \cdot
  (\var_{\idxt}, \Lbrack \var_{1}, \dots, \var_{\prlen}\Rbrack \ominus
  \Lbrack \var_{\idxt} \Rbrack) = \rs$. Since $\rp_{|\rs_{1}|}.\cur =
  \var$, $\prkimg ( \pword' \restr \vars) ( \var) \ge 1$. Let $\pword'
  = \pword_{1} \var \pword_{2}$. We can take $\pword = \pword_{1}
  \word_{0} \var_{1} \word_{1} \cdots \var_{\prlen} \word_{\prlen}
  \pword_{2}$.

  Case 3: $\var \rightsquigarrow \word_{2}'$ is a production and
  $\rs_{1} = \rs_{2} \cdot (\var, \emptyset)
  \xLongrightarrow{ \word_{2}'} \rs_{2} \cdot (\bot, \emptyset) =
  \rs$. Since $\rp_{|\rs_{1}|}.\cur = \var$, $\prkimg ( \pword' \restr
  \vars) ( \var) \ge 1$. Let $\pword' = \pword_{1} \var \pword_{2}$.
  We can take $\pword = \pword_{1} \word_{2}' \pword_{2}$.

  Case 4: $\var \rightsquigarrow \word_{2}'$ is a production and
  $\rs_{1} = \rs_{2} \cdot (\var, \varmults \oplus \Lbrack \var'
  \Rbrack) \xLongrightarrow{\word_{2}'} \rs_{2} \cdot ( \var',
  \varmults) = \rs$. Since $\rp_{|\rs_{1}|}.\cur = \var$, $\prkimg (
  \pword' \restr \vars) ( \var) \ge 1$. Let $\pword' = \pword_{1} \var
  \pword_{2}$.  We can take $\pword = \pword_{1} \word_{2}'
  \pword_{2}$.

  Case 5: $\rs_{1} = \rs_{2} \cdot (\var, \varmults \oplus \Lbrack
  \var' \Rbrack) \cdot (\bot, \emptyset) \xLongrightarrow{ \epsilon}
  \rs_{2} \cdot ( \var', \varmults)$. We can take $\pword = \pword'$.
  This completes the induciton step and hence the claim is true.

  Now we will prove the lemma. Suppose $(\macaxiom, \emptyset)
  \xLongrightarrow{ \word} (\bot, \emptyset)$. By the above claim,
  $\cfg$ can generate a word $\pword$ such that $\zv = \Lbrack (\bot,
  \emptyset) \Rbrack = \prkimg ( \pword \restr \vars)$ and $\prkimg (
  \word) = \prkimg ( \pword \restr \termins)$.
\end{proof}

\paragraph{Acknowledgements} The author would like to thank Pierre
Ganty for helpful discussions.

\bibliographystyle{plain}
\bibliography{References}

\newpage
\appendix
\end{document}